\begin{document}

\title{\vspace*{-2.9cm} A two-player portfolio tracking 
  game}
 
\author{ Moritz Vo{\ss}\footnote{University of California Los Angeles,
    Department of Mathematics, Los Angeles, CA 90095, USA, email \texttt{voss@math.ucla.edu}.}  }
\date{\today}

\maketitle
\begin{abstract}
We study the competition of two strategic agents for liquidity in the benchmark portfolio tracking setup of~\citet{BankSonerVoss:17}. Specifically, both agents track their own stochastic running trading targets while interacting through common aggregated temporary and permanent price impact \`a la~\citet{AlmgChr:01}. The resulting stochastic linear quadratic differential game with terminal state constraints allows for a unique and explicitly available open-loop Nash equilibrium. Our results reveal how the equilibrium strategies of the two players take into account the other agent's trading targets: either in an exploitative intent or by providing liquidity to the competitor, depending on the relation between temporary and permanent price impact. As a consequence, different behavioral patterns can emerge as optimal in equilibrium. These insights complement and extend existing studies in the literature on predatory trading models examined in the context of optimal portfolio liquidation games.
\end{abstract}
 
\begin{description}
\item[Mathematical Subject Classification (2010):] 91A15, 91A23,
  91G80,\\ 49N10, 60H30
\item[JEL Classification:] C61, C73, G11
\item[Keywords:] stochastic differential game, Nash equilibrium,
  illiquid markets, portfolio tracking, predatory trading 
\end{description}


\section{Introduction}

In recent years, studying so-called \emph{price impact games} (also referred to as \emph{market impact games}) in the context of optimal portfolio liquidation problems has gained a lot of attraction in the financial mathematics literature. They investigate the strategic interaction of financial agents, who simultaneously trade in the same risky asset in order to cost-efficiently liquidate their position while affecting the asset's execution price through jointly generated price impact. That is, influencing the price in an adverse manner when they execute their buy or sell orders. These price impact games provide a tractable way to formalize the competition between agents for a risky asset's liquidity. Among the first game-theoretic approaches carried out to investigate possible phenomena in a competitive equilibrium where agents seek to liquidate their positions in the same risky asset are, e.g.,~\citet{BrunnermeierPedersen:05},~\citet{AttariMelloRuckes:05},~\citet{CarlinLoboViswanathan:07},~\citet{Schoen:08},~\citet{SchiedSchoeneborn:09},~\citet{CarmonaYang:09}, and~\citet{SchiedZhang:17}.

Our goal in this paper is to extend these works by formulating and studying the competition between two strategic agents for liquidity when both agents are trading simultaneously in an illiquid risky asset affected by price impact, because each agent seeks to \emph{track} her own exogenously given stochastic target strategy like, e.g., a frictionless delta hedge to dynamically hedge the fluctuations of their random endowments. Single-agent optimal tracking problems in the presence of price impact have first been considered by \citet{RogerSin:10}, \citet{NaujWes:11}, \citet{HorstNaujokat:14}, and~\citet{CarteaJaimungal16}. To the best of our knowledge, the present manuscript is the first to study a dynamic tracking problem in a competitive two-player price impact game setting. Specifically, we extend the single-player cost optimal benchmark portfolio tracking problem studied in~\citet{BankSonerVoss:17} in the presence of temporary and permanent price impact as proposed by~\citet{AlmgChr:01} to a two-player stochastic differential game. Both strategic agents are fully aware of the opponent's individual tracking objectives and they compete for available liquidity as the jointly caused price impact on the execution price directly feeds into their trading performances. We also allow for individual stochastic terminal state constraints on each agent's final portfolio position. Our aim is to shed light on the strategic interplay between the agents and to make transparent how each agent takes into account the other agent's trading targets in an optimal cost minimizing manner by solving for a Nash equilibrium in this two-player price impact game.
 
The paper most closely related to ours is~\citet{SchiedZhang:17}. Therein, the authors determine a unique open-loop Nash equilibrium within the class of deterministic strategies of agents aiming to liquidate a given asset position by maximizing a mean-variance criterion in an~\citet{AlmgChr:01} framework. Their study is an extension of the corresponding deterministic differential game solved in~\citet{CarlinLoboViswanathan:07} of liquidating risk-neutral agents who maximize expected revenues. Other extensions of the latter game include, e.g.,~\citet{SchiedSchoeneborn:09},~\citet{CarmonaYang:09},~\citet{MoallemiParkVanRoy:12},~\citet{ChuLehnertPassmore:09}. In contrast to these papers, which focus on optimal portfolio liquidation only, we additionally allow the agents to track their own general predictable target strategies as in the single-player case investigated in~\cite{BankSonerVoss:17}. Moreover, facing the same time horizon, the players' terminal portfolio positions are also restricted to some exogenously predetermined stochastic levels which reveal gradually over time. As a consequence, both agents will choose their dynamic trading strategies from a suitable set of adapted stochastic processes rather than opting for static strategies from a set of deterministic functions as in the papers cited above (except for the numerical study in \cite{CarmonaYang:09}). 

Other recent work on both finite-player as well as infinite-player mean field price impact games with Almgren-Chriss type price impact include, e.g., \citet{CardaliaguetLehalle:18}, \citet{HuangJaimungalNourian:19}, \citet{CasgrainJaimungal:18, CasgrainJaimungal:19}, \citet{FuGraeweHorstPopier:20}, \citet{FuHorst:20}, \citet{EvangelistaThamsten:20}, and \citet{DrapeauSchied:20}, where finitely and infinitely many agents pursue optimal liquidation of their initial positions and interact through common aggregated permanent and temporary price impact. Price impact games of liquidating agents in a market model with transient price impact are analyzed, e.g., in \citet{LuoSchied:20}, \citet{SchiedZhang:19}, \citet{SchiedStrehleZhang:17}, \citet{Strehle:17}; and very recently in \citet{FuHorstXia:20} and \citet{NeumanVoss:21}. However, these works are all \emph{portfolio liquidation games} where the agents steer their initial portfolio positions towards zero (with strict liquidation constraints enforced in~\cite{FuGraeweHorstPopier:20, FuHorst:20, EvangelistaThamsten:20, FuHorstXia:20}). In particular, the agents neither track any individual stochastic running trading targets nor do they aim for reaching an individual random terminal target position. In contrast, as mentioned above, our present study formulates and solves a two-player price impact \emph{portfolio tracking game} with random terminal state constraints between two heterogeneous agents who have their own individual trading targets. 

Our main result is an \emph{explicit} description of a unique open-loop Nash equilibrium within the class of progressively measurable strategies to our two-player stochastic differential game, where both agents track their own target strategies as in~\cite{BankSonerVoss:17} and interact through temporary and permanent price impact as in~\cite{SchiedZhang:17} and \cite{CarlinLoboViswanathan:07}. Mathematically, we solve a linear quadratic stochastic differential game with random terminal state constraints. Inspired by the analysis in~\cite{BankSonerVoss:17}, we follow a probabilistic and convex-analytic approach in the style of Pontryagin's stochastic maximum principle. This also allows us to consider general predictable strategies as the agents' tracking targets and not necessarily Markovian or continuous diffusion-type processes. We prove uniqueness of the Nash equilibrium and derive its characterization, which takes the form of a four-dimensional coupled system of linear forward-backward stochastic differential equations (FBSDEs). Due to the stochastic terminal state constraints the FBSDE system has singular terminal conditions. As a consequence, explicitly computing a solution to the constrained stochastic differential game is a nontrivial task. The manuscript shows how this can be achieved. Solving the singular FBSDE system provides us with the agents' optimal trading strategies in equilibrium in closed-form and unveils a rich phenomenology for their optimal behaviour. 

In fact, it turns out that in equilibrium, similar to the single-player solution presented in~\cite{BankSonerVoss:17}, both agents anticipate their individual running target portfolio by gradually trading in the direction of a weighted average of expected future target positions of the target strategy. However, being aware of the competitor's tracking goals, each agent also assesses a weighted average of the expected future positions of the opponent's target strategy and chooses to trade accordingly. Interestingly, it arises that the agents' \emph{trading directions} with respect to the adversary's target strategy are not invariant but depend on the relation between temporary and permanent price impact. Conceptually, our explicit results extend the analysis carried out by~\citet{SchiedSchoeneborn:09}. Therein, the authors identify two distinct types of illiquid markets: A \emph{plastic} market where the price impact is predominantly permanent, and an \emph{elastic} market where the major part of incurred price impact is temporary. Their model predicts that a competitor who is conscious of the other agent's liquidation intention engages in predatory trading in a plastic market (in the sense that the competitor partly trades in the \emph{same} direction as her opponent), while she tends to cooperate and provides liquidity in an elastic market (in the sense that she trades in the \emph{opposite} direction of her opponent's trading); cf.~also the detailed discussion in \citet{SchiedSchoeneborn:09}. Our closed-form Nash equilibrium solution of our more general price impact tracking game corroborates this. The novelty of our contribution comes from the fact that \emph{both} predation by simultaneously trading in the same direction as the opponent, as well as cooperation by trading in the opposite direction can occur in a \emph{coexisting} manner; depending on whether the market is plastic or elastic. As a consequence, different behavioral paradigms can emerge as optimal in our Nash equilibrium; see the illustrations in Section~\ref{sec:illustrations}.

The remainder of the paper is organized as follows. In Section~\ref{sec:problem} we introduce our two-player stochastic differential price impact game by extending the framework of~\citet{CarlinLoboViswanathan:07} and~\citet{SchiedZhang:17} to a stochastic tracking problem of general predictable target strategies and random terminal state constraints. Our main result, an explicit description of a unique open-loop Nash equilibrium of the game is presented in Section~\ref{sec:main}. Section~\ref{sec:illustrations} contains some illustrations and discusses the qualitative behaviour of the two players' optimal strategies in equilibrium.

\medskip

\emph{Notation:} Throughout this manuscript we use superscripts for enumerating purposes as, e.g., in $X^1$, $X^2$, $\alpha^1$, $\alpha^2$, or other quantities like $\xi^1$, $\xi^2$ etc., to mark all objects which are associated with player 1 and player 2, respectively; or, to itemize objects as $w^1$, $w^2$, $w^3$ etc. In particular, $X^2$, $\alpha^2$, $\xi^2$ is not to be confused with quadratic powers, which will be explicitly denoted with brackets like $(\alpha)^2$, or, if necessary, as $(\alpha^2)^2$.

\section{Problem formulation} \label{sec:problem}

Let $T >0$ denote a finite deterministic time horizon and fix a
filtered probability space $(\Omega,\cF,(\cF_t)_{0 \leq t \leq T},\P)$
satisfying the usual conditions of right continuity and completeness.
We consider two agents (preferred pronouns she/her/hers and he/him/his, respectively) who are trading in a financial market
consisting of one risky asset, e.g., a stock. The number of shares
agent 1 and agent 2 are holding at time $t \in [0,T]$ are defined,
respectively, as
\begin{equation}
  X^1_t \set x^1 + \int_0^t \alpha^1_s ds \qquad \text{and}  \qquad X^2_t \set
  x^2 + \int_0^t  \alpha^2_s ds
\end{equation}
with initial positions $x^1, x^2 \in \RR$. The real-valued stochastic
processes $(\alpha^1_t)_{0 \leq t \leq T}$ and
$(\alpha^2_t)_{0 \leq t \leq T}$ represent the turnover rate at which
each agent trades in the risky asset and belong to the general class of stochastic processes
\begin{equation} \label{def:setA}
  \cA \set \left\{ \alpha : \alpha \text{ progressively measurable s.t. }
    \mathbb{E} \left[ \int_0^T (\alpha_t)^2 dt \right] < \infty \right\}.
\end{equation}
We adopt the framework from~\citet{CarlinLoboViswanathan:07}
and~\citet{SchiedZhang:17} and suppose that the agents' trading incurs
linear temporary and permanent price impact \`a la~\citet{AlmgChr:01}
in the sense that trades in the risky asset are executed at prices
\begin{equation} \label{def:exPrice}
  S_ t \set P_t + \lambda (\alpha^1_t + \alpha^2_t)+ \gamma ((X^1_t -
  x^1) + (X^2_t - x^2)) \quad (0
  \leq t \leq T)
\end{equation}
with some unaffected price process
$P_\cdot= P_0 + \sqrt{\sigma} W_\cdot$ following a Brownian motion
$(W_t)_{0 \leq t \leq T}$ with respect to the underlying filtration with variance $\sigma > 0$. The trading of
both agents in the risky asset consumes available liquidity and
instantaneously affects the execution price in~\eqref{def:exPrice} in
an adverse manner through temporary price impact $\lambda > 0$. In
addition, the agents' total accumulated trading activity also leaves a
trace in the execution price which is captured by the permanent price
impact parameter $\gamma >0$.

Similar to the single-agent setup in~\citet{BankSonerVoss:17} we
assume that agent~1 and agent~2 are trading in this illiquid risky
asset because each agent seeks to track their own exogenously given
target strategy $(\xi_t^1)_{0 \leq t \leq T}$ and
$(\xi_t^2)_{0 \leq t \leq T}$, respectively. Both processes $\xi^1$
and $\xi^2$ are supposed to be real-valued predictable processes in
$L^2(\PP \otimes dt)$ and can be thought of, for instance, as hedging
strategies adopted from a frictionless market. Moreover, the agents
are also required to reach a predetermined terminal portfolio target
position $\Xi^1_T$ and $\Xi^2_T$ in $L^2(\PP,\cF_T)$ at time
$T$. Mathematically, we can formalize their objectives as follows: For
a given strategy $(\alpha^2_t)_{0 \leq t \leq T}$ of her competitor
agent~2, agent~1 aims to choose her trading rate
$(\alpha^1_t)_{0 \leq t \leq T}$ in order to minimize the cost
functional
\begin{equation}
  \label{def:objective1} \begin{aligned}
    J^1(\alpha^1;\alpha^2) \set
    & \; \E \left[ \frac{1}{2} \sigma \int_0^T
      (X^1_t
      - \xi^1_t)^2 dt \right. \\
    & \left. \hspace{14pt} + \frac{1}{2} \lambda \int_0^T
      \alpha^1_t \left( \alpha^1_t + \alpha^2_t \right) dt + \frac{1}{2} \gamma
      \int_0^T \alpha^1_t \left( X^2_t - x^2 \right) dt \right],
  \end{aligned}
\end{equation}
whereas agent~2 wishes to minimize
\begin{equation} \label{def:objective2} \begin{aligned}
    J^2(\alpha^2;\alpha^1) \set
    & \; \E \left[ \frac{1}{2} \sigma \int_0^T
      (X^2_t
      - \xi^2_t)^2 dt \right. \\
    & \left. \hspace{14pt} +
    \frac{1}{2} \lambda \int_0^T
    \alpha^2_t \left( \alpha^1_t + \alpha^2_t \right) dt + \frac{1}{2} \gamma
    \int_0^T \alpha^2_t \left( X^1_t - x^1 \right) dt \right]
\end{aligned}
\end{equation}
his trading rate $(\alpha^2_t)_{0 \leq t \leq T}$ in response to a
given strategy $(\alpha^1_t)_{0 \leq t \leq T}$ of his opponent
agent~1. As in the single-agent problem in~\citet{BankSonerVoss:17},
the first term in~\eqref{def:objective1} and~\eqref{def:objective2}
reflects the agents' running after their individual target strategies
$\xi^1$ and $\xi^2$, respectively, through minimizing the
corresponding square deviation from their respective portfolio
positions~$X^1$ and~$X^2$. The common weight parameter $\sigma$
measures price fluctuations of the underlying unaffected price
process. The second and third terms in~\eqref{def:objective1}
and~\eqref{def:objective2} take into account the additional incurred
linear quadratic illiquidity costs which are induced by temporary and
permanent price impact while both agents are trading in the risky
asset as stipulated in~\eqref{def:exPrice} (see also~\citet{CarlinLoboViswanathan:07} and~\citet{SchiedZhang:17}). 
Note, however, that due to each agent's
individual terminal state constraint $X^i_T = \Xi^i_T$ $\PP$-a.s. (for
$i=1,2$) only the competitor's accrued permanent price impact feeds
into their respective cost functional. Indeed, integration by parts
yields that the $i$-th agent's permanent impact from their own trading
always creates the same costs
$\gamma (X^i_T - x^i)^2=\gamma (\Xi^i_T - x^i)^2$ independent of their
chosen trading rate and therefore can be neglected in their own
objective functional. We obtain following individual optimal
stochastic control problems for agent~1 and agent~2, namely,
\begin{equation} \label{eq:optProbAgent1}
  J^1(\alpha^1;\alpha^2) \rightarrow \min_{\alpha^1 \in \cA^1}
\end{equation}
for any fixed strategy $\alpha^2 \in \cA^2$, and
\begin{equation} \label{eq:optProbAgent2}
  J^2(\alpha^2;\alpha^1)  \rightarrow \min_{\alpha^2 \in \cA^2}, 
\end{equation}
for any fixed strategy $\alpha^1 \in \cA^1$, where
$\cA^{i}$, $i=1,2$, is the set of admissible constrained policies
defined as
\begin{equation} \label{def:admissibilitySet}
  \cA^{i} \set \left\{ \alpha^i : \alpha^i \in \cA \text{ satisfying
    } X^i_T = x^i + \int_0^T \alpha^i_t dt = 
    \Xi^i_T \; \PP\text{-a.s.} \right\}.
\end{equation}
Similar to~\citet{BankSonerVoss:17} we further assume that the target
positions $\Xi^1_T, \Xi^2_T \in L^2(\PP,\cF_T)$ satisfy
\begin{equation} \label{eq:assumption}
  \EE \left[ \int_0^T
      \frac{1}{T-s} d\langle M^+ \rangle_s \right] < \infty \quad
    \text{and} \quad \EE \left[ \int_0^T
      \frac{1}{T-s} d\langle M^- \rangle_s \right] < \infty,
\end{equation}
where $M^+_t \set \EE[\Xi^1_T + \Xi^2_T \vert \cF_t]$ and $M^-_t \set
\EE[\Xi^1_T - \Xi^2_T \vert \cF_t]$ for $0 \leq t \leq T$. 

\begin{Remark}
  \begin{enumerate}
  \item As in \citet{CarlinLoboViswanathan:07}
    and~\citet{SchiedZhang:17} the agent's individual optimization
    problems in~\eqref{eq:optProbAgent1} and~\eqref{eq:optProbAgent2}
    are intertwined through common aggregated temporary and permanent
    price impact affecting their performance functionals $J^1$ and
    $J^2$ in~\eqref{def:objective1} and~\eqref{def:objective2} (in
    contrast to, e.g,~\citet{HuangJaimungalNourian:19},
    \citet{CasgrainJaimungal:18, CasgrainJaimungal:19} or
    \citet{EkrenNadtochiy:19} where agents only interact through
    permanent or temporary price impact, respectively). One can think
    of both players as \emph{strategic agents} who compete for
    liquidity while concurrently trading in a single illiquid risky
    asset to meet their tracking objectives for the purpose of, e.g.,
    hedging fluctuations of random endowments. Note that both agents
    are fully aware of the opponent's trading targets~$\xi^i$
    and~$\Xi_T^i$ ($i=1,2$), as well as the jointly caused price
    impact on the execution prices in~\eqref{def:exPrice}. That is,
    our game is one of complete information as in the related studies
    in~\citet{BrunnermeierPedersen:05},~\citet{CarlinLoboViswanathan:07},
    \citet{SchiedSchoeneborn:09},~\citet{CarmonaYang:09},
    and~\citet{SchiedZhang:17}.
  \item For further motivation for the tracking cost functionals
    in~\eqref{def:objective1} and~\eqref{def:objective2} we refer to
    the single-player optimization problems studied, e.g., in~\citet{RogerSin:10}, \citet{NaujWes:11},~\citet{HorstNaujokat:14},~\citet{AlmgLi:16},~\citet{BankSonerVoss:17}, 
    and~\citet{CaiRosenbaumTankov:17}. Observe that the square
    tracking error also incorporates a risk aversion on each player's
    inventory. In this regard, both agents are homogeneous in their
    inventory risk.
    \item Note that the coefficients $\sigma, \lambda, \gamma > 0$ in the cost functionals in~\eqref{def:objective1} and~\eqref{def:objective2} are constants. This is an important assumption for obtaining a closed-from solution for the stochastic differential game, which is our primary focus of interest. In fact, the only sources of randomness in the game are the target strategies $(\xi^1_t)_{0 \leq t \leq T}$, $(\xi^2_t)_{0 \leq t \leq T}$ and the random terminal conditions $\Xi^1_T$, $\Xi^2_T$, which will force the agents' optimal policies to be random processes as well.
  \item Analog to the study in~\citet{BankSonerVoss:17} the assumption
    in~\eqref{eq:assumption} will ensure that $\cA^i \neq \varnothing$
    for $i=1,2$ (cf. also the proof of Theorem~\ref{thm:main} in
    Section~\ref{sec:main} below). In fact, for given random
    variables $\Xi^i_T \in L^2(\PP,\cF_T)$ only known at time $T$ the
    terminal state constraint $X^i_T = \Xi^i_T$ $\PP$-a.s. ($i=1,2$)
    is quite demanding. Thus, loosely speaking, the condition
    in~\eqref{eq:assumption} requires that the speed at which
    information on the random ultimate target positions $\Xi^1_T$,
    $\Xi^2_T$ is revealed as $t \uparrow T$ is sufficiently fast.
  \end{enumerate}
\end{Remark}

Our goal is to compute a Nash equilibrium in which both agents solve
their minimization problems in~\eqref{eq:optProbAgent1} and
\eqref{eq:optProbAgent2} simultaneously, given the strategy of their
competitor, in the following sense:

\begin{Definition} \label{def:Nash} A pair of admissible strategies
  $ (\hat{\alpha}^1,\hat{\alpha}^2) \in \cA^1 \times \cA^2$ is called
  an \emph{open-loop Nash equilibrium} if for all admissible
  strategies $\alpha^1 \in \cA^1$ and $\alpha^2 \in \cA^2$ it holds
  that
  \begin{equation*}
    J^1 (\hat{\alpha}^1;\hat{\alpha}^2) \leq J^1 (\alpha^1;\hat{\alpha}^2)
    \quad 
    \text{and} \quad  J^2 (\hat{\alpha}^2;\hat{\alpha}^1) \leq J^2
    (\alpha^2;\hat{\alpha}^1).
  \end{equation*}
\end{Definition}

In other words, in a Nash equilibrium neither player has an incentive to
deviate from the chosen strategy.

\begin{Remark} \label{rem:problem}
  In the special case of optimally liquidating the agents'
    initial risky asset holdings $x^1, x^2 \in \RR$ without tracking
    exogenously given target strategies, i.e.,
    $\xi^1 \equiv \xi^2 \equiv 0$, and with non-random terminal target
    positions $\Xi^1_T = \Xi^2_T = 0$ $\PP$-almost surely, the above
    formulated two-player (deterministic) differential game is solved
    in~\citet{CarlinLoboViswanathan:07} setting $\sigma = 0$ in the
    performance functionals in~\eqref{def:objective1}
    and~\eqref{def:objective2}; and in~\citet{SchiedZhang:17} allowing
    for $\sigma > 0$ instead. In both studies, the authors obtain a
    unique open-loop Nash equilibrium in the sense of
    Definition~\ref{def:Nash} in closed form within the class of
    deterministic strategies.
\end{Remark}

\section{Main result} \label{sec:main}

Our main result is an explicit description of a unique open-loop Nash equilibrium in the sense of Definition~\ref{def:Nash} of the two-player stochastic differential game formulated in Section~\ref{sec:problem}. Inspired by~\citet{BankSonerVoss:17} we will use tools from convex analysis and simple calculus of variations arguments to derive the equilibrium strategies. 

First, a strict convexity property of each players' objective in~\eqref{def:objective1} and~\eqref{def:objective2} is established in the following

\begin{Lemma} \label{lem:convex}
For every $\alpha^2 \in \mathcal{A}^2$ fixed, the functional $\alpha^1 \mapsto J^1(\alpha^1;\alpha^2)$ in \eqref{def:objective1} is strictly convex in $\alpha^1 \in \mathcal{A}^1$. Similarly, for every $\alpha^1 \in \mathcal{A}^1$ fixed, the functional $\alpha^2 \mapsto J^2(\alpha^2;\alpha^1)$ in \eqref{def:objective2} is strictly convex in $\alpha^2 \in \mathcal{A}^2$.
\end{Lemma}

\begin{proof}
We only show strict convexity of the first agent's objective in~\eqref{def:objective1}. The reasoning for the second agent's objective in~\eqref{def:objective2} follows analogously. 
To this end, let $\alpha^2 \in \mathcal{A}^2$ be fixed. Consider $\alpha^1,\tilde{\alpha}^1 \in \mathcal{A}^1$ such that $\alpha^1 \neq \tilde{\alpha}^1$ $d\P \otimes dt\textrm{-a.e. on } \Omega \times[0,T]$ and denote by $X^1, \tilde{X}^1$ the corresponding share holdings. For every $\varepsilon \in (0,1)$ it holds that $\varepsilon \alpha^1 + (1-\varepsilon) \tilde{\alpha}^1 \in \mathcal{A}^1$ with share holdings $X^{\varepsilon \alpha^1 + (1-\varepsilon) \tilde{\alpha}^1} =
\varepsilon X^{1} + (1-\varepsilon) \tilde{X}^1$. We have to show that
\begin{equation*}
\varepsilon J^1(\alpha^1;\alpha^2) + (1-\varepsilon) J^1(\tilde{\alpha}^1;\alpha^2) - J^1(\varepsilon \alpha^1 + (1-\varepsilon) \tilde{\alpha}^1; \alpha^2) > 0.
\end{equation*}
In fact, a straightforward computation reveals that
\begin{equation*}
  \begin{aligned}
    & \varepsilon J^1(\alpha^1;\alpha^2) + (1-\varepsilon) J^1(\tilde{\alpha}^1;\alpha^2) - J^1(\varepsilon \alpha^1 + (1-\varepsilon) \tilde{\alpha}^1; \alpha^2) \\
    & = \frac{1}{2} \varepsilon (1-\varepsilon) \mathbb{E} \left[ \int_0^T \left(
    \sigma (X^1_t- \tilde{X}^1_t)^2+\lambda (\alpha^1_t - \tilde{\alpha}^1_t)^2 \right) dt \right] >0
  \end{aligned}
\end{equation*}
because $\alpha^1 \neq \tilde{\alpha}^1$ $d\P \otimes ds\textrm{-a.e. on } \Omega \times[0,T]$.
\end{proof}

As an important consequence we obtain
\begin{Lemma} \label{lem:uniqueNE}
  There exists at most one Nash equilibrium in the sense of
  Definition~\ref{def:Nash}.
\end{Lemma}

\begin{proof}
  We adapt the argument from~\citet[Lemma 4.1]{SchiedZhang:17} (see also~\citet[Proposition 4.8]{SchiedStrehleZhang:17}) to our stochastic differential game and prove the claim by contradiction. Specifically, assume that there exist two distinct Nash equilibria $(\hat{\alpha}^1,\hat{\alpha}^2)$ and $(\tilde{\alpha}^1,\tilde{\alpha}^2)$ in $\cA^1 \times \cA^2$, i.e.,
    \begin{equation} \label{proof:unique:NEs}
    \begin{aligned}
    J^1 (\hat{\alpha}^1;\hat{\alpha}^2) \leq J^1 (\alpha^1;\hat{\alpha}^2)
    \quad 
    \text{and} \quad  J^2 (\hat{\alpha}^2;\hat{\alpha}^1) \leq J^2
    (\alpha^2;\hat{\alpha}^1), \\
    J^1 (\tilde{\alpha}^1;\tilde{\alpha}^2) \leq J^1 (\alpha^1;\tilde{\alpha}^2)
    \quad 
    \text{and} \quad  J^2 (\tilde{\alpha}^2;\tilde{\alpha}^1) \leq J^2
    (\alpha^2;\tilde{\alpha}^1),
    \end{aligned}
  \end{equation}
  for all admissible strategies $\alpha^1 \in \cA^1$ and $\alpha^2 \in \cA^2$. Then we can define for all $\varepsilon \in [0,1]$ the function
  \begin{equation} \label{proof:unique:deff}
  \begin{aligned}
      f(\varepsilon) \triangleq & \, J^1(\varepsilon \tilde{\alpha}^1 + (1-\varepsilon) \hat{\alpha}^1;\hat{\alpha}^2) + J^2(\varepsilon \tilde{\alpha}^2 + (1-\varepsilon) \hat{\alpha}^2;\hat{\alpha}^1) \\
      & \, + J^1((1-\varepsilon) \tilde{\alpha}^1 + \varepsilon \hat{\alpha}^1 ;\tilde{\alpha}^2) + J^2((1-\varepsilon) \tilde{\alpha}^2 + \varepsilon \hat{\alpha}^2 ;\tilde{\alpha}^1) .
  \end{aligned}
  \end{equation}
  Note that due to Lemma~\ref{lem:convex} and the assumption that the two Nash equilibria $(\hat{\alpha}^1,\hat{\alpha}^2)$ and $(\tilde{\alpha}^1,\tilde{\alpha}^2)$ are distinct, the function $f(\varepsilon)$ is strictly convex in $\varepsilon$ on $[0,1]$. Moreover, in light of~\eqref{proof:unique:NEs} it has a unique minimum in $\varepsilon = 0$. It follows that
  \begin{equation} \label{proof:unique:fpos}
      \lim_{\varepsilon \downarrow 0} \frac{f(\varepsilon)-f(0)}{\varepsilon} = \frac{d}{d\varepsilon} f(\varepsilon) \Big\vert_{\varepsilon = 0+} \geq 0.
  \end{equation}
  Next, denoting the corresponding share holdings of $\hat{\alpha}^1$ and $\tilde{\alpha}^1$ with $\hat{X}^1$ and $\tilde{X}^1$, respectively, and noting that $X^{\varepsilon \tilde{\alpha}^1 + (1-\varepsilon) \hat{\alpha}^1} = \varepsilon \tilde{X}^1 + (1-\varepsilon) \hat{X}^1$, we can compute
    \begin{equation*} 
  \begin{aligned}
      & \frac{d}{d\varepsilon} J^1(\varepsilon \tilde{\alpha}^1 + (1-\varepsilon) \hat{\alpha}^1;\hat{\alpha}^2) \Big\vert_{\varepsilon = 0+} \\
      & = \mathbb{E} \left[ \sigma\! \int_0^T (\hat{X}^1_t - \xi^1_t) (\tilde{X}^1_t - \hat{X}^1_t) dt + \int_0^T (\tilde{\alpha}^1_t-\hat{\alpha}^1_t) \left( \frac{1}{2} \lambda (2\hat{\alpha}^1_t + \hat{\alpha}^2_t) + \frac{1}{2} \gamma (\hat{X}^2_t - x^2) \right)dt \right],
  \end{aligned}
  \end{equation*}
  as well as the derivatives of the remaining three terms in~\eqref{proof:unique:deff} in a very similar manner in order to ultimately obtain
  \begin{equation*} 
  \begin{aligned}
      & \frac{d}{d\varepsilon} f(\varepsilon) \Big\vert_{\varepsilon = 0+} \\
      & = - \sigma \mathbb{E} \left[ \int_0^T \left( (\tilde{X}^1_t - \hat{X}^1_t)^2 + (\tilde{X}^2_t - \hat{X}^2_t)^2 \right) dt \right] \\
      & \quad + \frac{1}{2} \gamma \mathbb{E}\left[ \int_0^T (\tilde{\alpha}^1_t - \hat{\alpha}^1_t) (\hat{X}^2_t - \tilde{X}^2_t) dt \right] + \frac{1}{2} \gamma \mathbb{E}\left[ \int_0^T (\tilde{\alpha}^2_t - \hat{\alpha}^2_t) (\hat{X}^1_t - \tilde{X}^1_t) dt \right] \\
      & \quad - \lambda \mathbb{E} \left[ \int_0^T \left( (\tilde{\alpha}^1_t - \hat{\alpha}^1_t) + (\tilde{\alpha}^2_t - \hat{\alpha}^2_t) \right)^2 dt \right],
  \end{aligned}
  \end{equation*}
  where $\hat{X}^2$ and $\tilde{X}^2$ denote the share holdings of $\hat{\alpha}^2$ and $\tilde{\alpha}^2$, respectively. Observing that integration by parts yields
  \begin{equation*}
      \int_0^T (\tilde{\alpha}^1_t - \hat{\alpha}^1_t) (\hat{X}^2_t - \tilde{X}^2_t) dt = - \int_0^T (\tilde{\alpha}^2_t - \hat{\alpha}^2_t) (\hat{X}^1_t - \tilde{X}^1_t) dt
  \end{equation*}
  because $\tilde{X}^i_0 = \hat{X}^i_0 = x^i$ and $\hat{X}^i_T = \tilde{X}^i_T = \Xi^i_T$ for both $i \in \{1,2\}$, we obtain
  \begin{equation*}
    \begin{aligned}
  \frac{d}{d\varepsilon} f(\varepsilon) \Big\vert_{\varepsilon = 0+} = & \, - \sigma \mathbb{E} \left[ \int_0^T \left( (\tilde{X}^1_t - \hat{X}^1_t)^2 + (\tilde{X}^2_t - \hat{X}^2_t)^2 \right) dt \right] \\
  & \, - \lambda \mathbb{E} \left[ \int_0^T \left( (\tilde{\alpha}^1_t - \hat{\alpha}^1_t) + (\tilde{\alpha}^2_t - \hat{\alpha}^2_t) \right)^2 dt \right]
  \end{aligned}
  \end{equation*}
  which is strictly negative because the two Nash equilibria $(\hat{\alpha}^1,\hat{\alpha}^2)$ and $(\tilde{\alpha}^1,\tilde{\alpha}^2)$ are distinct. But this contradicts~\eqref{proof:unique:fpos}.
\end{proof}

Next, for any arbitrary but fixed controls $\tilde{\alpha}^2 \in \cA^2$ and $\tilde{\alpha}^1 \in \cA^1$, we can introduce the G\^ateaux derivatives of the mappings $\alpha^1 \mapsto J^1(\alpha^1;\tilde{\alpha}^2)$ at $\alpha^1 \in \cA^1$ and $\alpha^2 \mapsto J^2(\alpha^2;\tilde{\alpha}^1)$ at $\alpha^2 \in \cA^2$, respectively, in any directions
$\beta^1, \beta^2 \in \cA^0 \set \{ \beta : \beta \in \cA \text{
  satisfying } \int_0^T \beta_t dt = 0 \; \PP\text{-a.s.}\}$, namely,
\begin{align*}
  \langle \nabla J^1(\alpha^1; \tilde{\alpha}^2), \beta^1 \rangle \set
  & \; \lim_{\varepsilon \rightarrow 0} \frac{J^1(\alpha^1 +\varepsilon
    \beta^1;\tilde{\alpha}^2)-J^1(\alpha^1,\tilde{\alpha}^2)}{\varepsilon},\\
  \langle \nabla J^2(\alpha^2;\tilde{\alpha}^1), \beta^2 \rangle \set
  & \; \lim_{\varepsilon \rightarrow 0}
    \frac{J^2(\alpha^2 +\varepsilon 
    \beta^2; \tilde{\alpha}^1 )-J^2(\alpha^2;\tilde{\alpha}^1)}{\varepsilon}.  
\end{align*}
They allow for following explicit expressions presented in

\begin{Lemma} \label{lem:gateaux}
Let $\tilde{\alpha}^2 \in \cA^2$ be fixed with corresponding share holdings $\tilde{X}^2$. Then for all $\alpha^1 \in \cA^1$ we have
  \begin{align}
    & \langle \nabla J^1(\alpha^1; \tilde{\alpha}^2), \beta^1 \rangle
      \nonumber \\
    & = \EE \left[ \int_0^T
      \beta^1_s \left( \lambda \alpha^1_s + \frac{\lambda}{2} 
      \tilde{\alpha}^2_s + \frac{\gamma}{2} (\tilde{X}^2_s - x^2) + \int_s^T (X^1_t - \xi^1_t)
      \sigma dt \right) ds \right] \label{eq:gateaux1}
    \end{align}
for any $\beta^1 \in \cA^0$. Similarly, let $\tilde{\alpha}^1 \in \cA^1$ be fixed with corresponding share holdings $\tilde{X}^1$. Then for all $\alpha^2 \in \cA^2$ we have    
    \begin{align}
    & \langle \nabla J^2(\alpha^2; \tilde{\alpha}^1), \beta^2 \rangle \nonumber \\
    & = \EE \left[ \int_0^T
      \beta^2_s \left( \lambda \alpha^2_s + \frac{\lambda}{2} 
      \tilde{\alpha}^1_s + \frac{\gamma}{2} (\tilde{X}^1_s - x^1) + \int_s^T (X^2_t - \xi^2_t)
      \sigma dt \right) ds \right]  \label{eq:gateaux2}
  \end{align}
  for any $\beta^2 \in \cA^0$.
\end{Lemma}

\begin{proof}
  We only compute the G\^ateaux derivative in~\eqref{eq:gateaux1}. The
  same computations apply for~\eqref{eq:gateaux2}. Fix $\tilde{\alpha}^2 \in \cA^2$ with share holdings $\tilde{X}^2$ and let
  $\alpha^1 \in \cA^1, \beta^1 \in \cA^0$ as well as $\varepsilon > 0$. Note
  that $\alpha^1 +\varepsilon \beta^1 \in \mathcal{A}^1$ with share holdings $X^{\alpha^1 +\varepsilon \beta^1} = X^1 +
  \varepsilon \int_0^\cdot \beta^1_s ds$. Moreover, since
  \begin{align*}
    &  J^1(\alpha^1 +\varepsilon \beta^1; \tilde{\alpha}^2)-J^1(\alpha^1;\tilde{\alpha}^2)
    \\
    & =  \, \varepsilon \EE \left[ \int_0^T \left( \frac{\lambda}{2} \beta^1_t
      (2 \alpha^1_t + \tilde{\alpha}^2_t) + 
      \left(\int_0^t \beta^1_s ds \right) (X^1_t - \xi^1_t) \sigma
      +\frac{\gamma}{2} \beta^1_t (\tilde{X}^2_t -x^2) \right) dt \right] \\
    & \quad + \frac{1}{2} \epsilon^2 \EE \left[ \int_0^T \left( \lambda 
      (\beta^1_t)^2 + \left(\int_0^t \beta^1_sds \right)^2 \sigma
      \right) dt \right],
  \end{align*}
  we obtain the desired result in~\eqref{eq:gateaux1} after applying
  Fubini's theorem.
\end{proof}

Having at hand the explicit expressions in~\eqref{eq:gateaux1}
and~\eqref{eq:gateaux2} we can now derive a sufficient and necessary first
order condition for the Nash equilibrium in terms of a system of coupled forward-backward stochastic differential equations (FBSDE).

\begin{Lemma} \label{lem:FOC}
  A pair of controls
  $(\hat{\alpha}^1,\hat{\alpha}^2) \in \cA^1 \times \cA^2$ is a Nash
  equilibrium in the sense of Definition~\ref{def:Nash} if and only if
  $(\hat{X}^1, \hat{X}^2, \hat{\alpha}^1,\hat{\alpha}^2)$ solve
  following coupled forward backward SDE system
  \begin{equation} \label{eq:FBSDE}
    \left\{
    \begin{aligned}
      dX^1_t = & \; \alpha^1_t dt, \qquad X^1_0 = x^1, \\
      dX^2_t = & \; \alpha^2_t dt, \qquad X^2_0 = x^2, \\
      d\alpha^1_t = & \; \frac{\sigma}{\lambda} (X^1_t - \xi^1_t) dt -
      \frac{\gamma}{2\lambda} \alpha^2_t dt - \frac{1}{2} d\alpha^2_t +
      dM^1_t, \qquad X^1_T =
      \Xi^1_T,\\
      d\alpha^2_t = & \; \frac{\sigma}{\lambda} (X^2_t - \xi^2_t) dt -
      \frac{\gamma}{2\lambda} \alpha^1_t dt - \frac{1}{2} d\alpha^1_t +
      dM^2_t, \qquad X^2_T = \Xi^2_T,
    \end{aligned}
    \right.
  \end{equation}
  for two suitable square integrable martingales
  $(M^1_t)_{0 \leq t < T}$ and $(M^2_t)_{0 \leq t < T}$.
\end{Lemma}

\begin{proof}
  \emph{Sufficiency:} Assume first that
  $(\hat{X}^1, \hat{X}^2,\hat{\alpha}^1,\hat{\alpha}^2, M^1, M^2)$
  with $(\hat{\alpha}^1,\hat{\alpha}^2) \in \cA^1 \times \cA^2$ solves
  the FBSDE system in~\eqref{eq:FBSDE}.  We have to show that
  $\hat{\alpha}^1$ minimizes
  $\alpha^1 \mapsto J^1(\alpha^1;\hat{\alpha}^2)$ over $\cA^1$, and,
  vice versa, that $\hat{\alpha}^2$ minimizes
  $\alpha^2 \mapsto J^2(\alpha^2;\hat{\alpha}^1)$ over $\cA^2$. Since
  we are minimizing strictly convex functionals due to Lemma~\ref{lem:convex}, a sufficient
  condition for the optimality of $\hat{\alpha}^1$ and
  $\hat{\alpha}^2$, respectively, is given by
  \begin{equation} \label{eq:proofFBSDE:1} 
  \langle \nabla
    J^1(\hat{\alpha}^1; \hat{\alpha}^2), \beta^1 \rangle = 0 \text{
      for all } \beta^1 \in \cA^0
  \end{equation}
  and
  \begin{equation} \label{eq:proofFBSDE:2} \langle \nabla
    J^2(\hat{\alpha}^2; \hat{\alpha}^1), \beta^2 \rangle = 0 \text{
      for all } \beta^2 \in \cA^0;
  \end{equation}
  cf., e.g., \citet{EkelTem:99}. We start
  with the proof of~\eqref{eq:proofFBSDE:1}. By assumption we have the
  representation
  \begin{align*}
    \hat{\alpha}^1_t =
    & \; \hat{\alpha}^1_0 + \frac{\sigma}{\lambda} \int_0^t
      (\hat{X}^1_s - \xi^1_s) ds- \frac{\gamma}{2\lambda} \int_0^t
      \hat{\alpha}^2_s ds \\
    & -\frac{1}{2} (\hat{\alpha}^2_t -
      \hat{\alpha}^2_0) + M^1_t-M^1_0 \quad d\PP \otimes dt 
      \text{-a.e. on } \Omega \times [0,T)
  \end{align*}
  for some square integrable martingale $(M^1_t)_{0 \leq t < T}$.
  Moreover, since
  $\hat{\alpha}^1, \hat{\alpha}^2, \xi^1 \in L^2(\PP \otimes dt)$ it
  follows that $\EE[\int_0^T (M_s^1)^2 ds] < \infty$. Next,
  introducing the square integrable martingale
  \begin{equation*}
    N_s \set \EE \left[ \int_0^T (\hat{X}^1_t - \xi^1_t) \sigma
      dt \, \bigg\vert \, \cF_s \right] \quad (0 \leq s \leq T) 
  \end{equation*}
  and plugging the above representation of $\hat{\alpha}^1$ in the
  G\^ateaux derivative in~\eqref{eq:gateaux1} we obtain
  \begin{align*}
    & \langle \nabla_1 J^1(\hat{\alpha}^1; \hat{\alpha}^2), \beta^1 \rangle \\
    &= \EE \left[ \int_0^T
      \beta^1_s \left( \lambda \hat{\alpha}^1_s + \frac{\lambda}{2} 
      \hat{\alpha}^2_s + \frac{\gamma}{2} (\hat{X}^2_s - x^2) + \int_s^T
      (\hat{X}^1_t - \xi^1_t) 
      \sigma dt \right) ds \right] \\
    & = \EE \left[ \int_0^T \beta^1_s \left( \lambda \hat{\alpha}^1_0 +
      \frac{\lambda}{2} \hat{\alpha}^2_0 +
      N_T + \lambda M^1_s - \lambda M^1_0 \right) ds  \right] \\
    & = \EE \left[ \left( \lambda \hat{\alpha}^1_0 +
      \frac{\lambda}{2} \hat{\alpha}^2_0 + N_T - \lambda M^1_0 \right) \int_0^T \beta^1_s ds
      \right] + \lambda \EE \left[ \int_0^T 
      \beta^1_s M^1_s ds \right] \\ 
    & = 0 \text{ for all } \beta^1 \in  \cA^0,
  \end{align*}
  where we used the result from~\citet[Lemma 5.3]{BankSonerVoss:17} in
  the last line. Hence, as desired, we obtain that the first order
  optimality condition in~\eqref{eq:proofFBSDE:1} is satisfied by
  $\hat{\alpha}^1 \in \cA^1$. In fact, the same computations apply to
  show that also $\hat{\alpha}^2 \in \cA^2$ is satisfying the first
  order optimality condition in~\eqref{eq:proofFBSDE:2}. Therefore, we
  can conclude that
  $(\hat{\alpha}^1,\hat{\alpha}^2) \in \cA^1 \times \cA^2$ is a Nash
  equilibrium in the sense of Definition~\ref{def:Nash}.

  \emph{Necessity:} Finally, as shown in the proof of
  Theorem~\ref{thm:main} below (which does not use the necessity
  assertion of the present lemma) the pair of controls 
  $(\hat{\alpha}^1, \hat{\alpha}^2) \in \cA^1 \times \cA^2$
  presented in~\eqref{eq:Xsol} below satisfies the coupled forward backward SDE system
  in~\eqref{eq:FBSDE}. Therefore, by uniqueness of the Nash
  equilibrium via Lemma~\ref{lem:uniqueNE} the assertion is indeed also
  necessary.
\end{proof}

We are now ready to state our main result. To do so, it is convenient to
introduce following nonnegative constants
\begin{equation} \label{def:deltaplusminus}
  \delta^+ \set \frac{\gamma^2}{4} + 6 \lambda \sigma, \qquad \delta^- \set
  \frac{\gamma^2}{4} + 2 \lambda \sigma,
\end{equation}
the nonnegative functions
\begin{equation} \label{def:cpcm}
  \begin{aligned}
    c^+_t \set & \; \frac{1}{3} \sqrt{\delta^+}
    \coth(\sqrt{\delta^+}(T-t)/(3\lambda)) + \frac{1}{6} \gamma, \\
    c^-_t \set & \; \sqrt{\delta^-}
    \coth(\sqrt{\delta^-}(T-t)/\lambda) -\frac{1}{2}\gamma 
  \end{aligned} \qquad (0 \leq t \leq T)
\end{equation}
such that $\lim_{t \uparrow T} c_t^{\pm} = +\infty$, as well as the
\emph{weight functions}
\begin{equation}
  \begin{aligned} \label{def:weights}
    w^1_t \set & \; \frac{\sqrt{\delta^+} \,
      e^{\frac{\gamma}{6\lambda}(T-t)}}{3 (c^+_t + c^-_t)
      \sinh(\sqrt{\delta^+}(T-t) /(3\lambda))}, \\
    w^2_t \set &\; \frac{\sqrt{\delta^-} \,
      e^{-\frac{\gamma}{2\lambda}(T-t)}}{(c^+_t + c^-_t)
      \sinh(\sqrt{\delta^-}(T-t)/\lambda)},
    \\
    w^3_t \set & \; \frac{c^+_t}{c^+_t + c^-_t} - w^1_t, \qquad
    w^4_t \set \frac{c^-_t}{c^+_t + c^-_t} - w^2_t, \qquad
    w^5_t \set \frac{c^+_t - c^-_t}{c^+_t + c^-_t}
  \end{aligned} 
\end{equation}
for all $t \in [0,T]$. An explicit description of the unique Nash equilibrium is provided in the following 

\begin{Theorem} \label{thm:main} There exists a unique open-loop Nash
  equilibrium $(\hat{\alpha}^1, \hat{\alpha}^2)$ in
  $\cA^1 \times \cA^2$ in the sense of Definition~\ref{def:Nash}. The
  corresponding equilibrium share holdings
  $\hat{X}^1_\cdot = x^1 + \int_0^\cdot \hat{\alpha}^1_tdt$ of agent~1
  and $\hat{X}^2_\cdot = x^2 + \int_0^\cdot \hat{\alpha}^2_tdt$ of
  agent~2 satisfy the random linear coupled ODE
  \begin{equation} \label{eq:Xsol}
    \begin{aligned}
      \hat{X}^1_0 = & \; x^1, & \quad d\hat{X}^1_t = & \;
      \frac{c^+_t+c^-_t}{2\lambda} \left( \hat{\xi}^1_t - w^5_t
        \hat{X}^2_t - \hat{X}^1_t \right) dt, \\
      \hat{X}^2_0 = & \; x^2, & \quad d\hat{X}^2_t = & \;
      \frac{c^+_t+c^-_t}{2\lambda} \left( \hat{\xi}^2_t - w^5_t
        \hat{X}^1_t - \hat{X}^2_t \right) dt
    \end{aligned}
    \quad (0 \leq t < T),
  \end{equation}
  where, for $0 \leq t \leq T$, we let
  \begin{equation}
    \begin{aligned} \label{def:optsignal1}
      \hat{\xi}^{1}_t \set
      & \; w^1_t \cdot \EE[\Xi^1_T + \Xi^2_T \,\vert\, \cF_t] + w^2_t
      \cdot \EE[\Xi^1_T - \Xi^2_T \,\vert\, \cF_t] \\
      & + w^3_t \cdot \EE\left[ \int_t^T (\xi^1_u + \xi^2_u) \cdot
        K^1(t,u) \,du \,\Big\vert \,\cF_t \right] \\
      & + w^4_t \cdot \EE\left[ \int_t^T (\xi^1_u - \xi^2_u) \cdot
        K^2(t,u) \, du \, \Big\vert \, \cF_t \right]
    \end{aligned}
  \end{equation}
  and
  \begin{equation}
    \begin{aligned} \label{def:optsignal2}
      \hat{\xi}^2_t \set
      & \; w^1_t \cdot \EE[\Xi^2_T + \Xi^1_T \,\vert\, \cF_t] + w^2_t
      \cdot \EE[\Xi^2_T - \Xi^1_T \,\vert\, \cF_t] \\
      & + w^3_t \cdot \EE\left[ \int_t^T (\xi^2_u + \xi^1_u) \cdot
        K^1(t,u) \,du \,\Big\vert \,\cF_t \right] \\
      & + w^4_t \cdot \EE\left[ \int_t^T (\xi^2_u - \xi^1_u) \cdot
        K^2(t,u) \, du \, \Big\vert \, \cF_t \right]
    \end{aligned}
  \end{equation}
  with nonnegative kernels
  \begin{equation}
    \begin{aligned} \label{def:kernels}
      K^1(t,u) \set & \; \frac{w^1_t}{w^3_t}
      \frac{2\sigma e^{-\frac{\gamma}{6\lambda}(T-u)} \sinh
        (\sqrt{\delta^+}(T-u)/(3\lambda))}{\sqrt{\delta^+}}, \\
      K^2(t,u) \set & \; \frac{w^2_t}{w^4_t}
      \frac{2\sigma e^{\frac{\gamma}{2\lambda}(T-u)} \sinh
        (\sqrt{\delta^-}(T-u)/\lambda)}{\sqrt{\delta^-}}
    \end{aligned}
    \;\; (0 \leq t \leq u < T)
  \end{equation}
  which, for each $t \in [0,T)$, integrate to one over $[t , T]$.
  The solution $(\hat{X}^1, \hat{X}^2)$ of~\eqref{eq:Xsol} satisfies
  the terminal state constraints in the sense that
  \begin{equation} \label{eq:terminalconds}
    \lim_{t \uparrow T} \hat{X}^1_t = \Xi^1_T \quad \text{and} \quad
    \lim_{t \uparrow T} \hat{X}^2_t = \Xi^2_T  \quad \mathbb{P}\text{-a.s.}
  \end{equation}  
\end{Theorem}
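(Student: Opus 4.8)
The plan is to follow the stochastic maximum principle / convex-analytic route sketched in the introduction, exactly as in \citet{BankSonerVoss:17}, adapted to the two-player coupling. Since each agent's cost functional $J^i$ is strictly convex in her own control $\alpha^i$ (the temporary-impact term contributes $\tfrac{1}{2}\lambda(\alpha^i_t)^2$ plus cross terms that are linear in $\alpha^i$ once the opponent's strategy is fixed, and the running tracking penalty is convex in $X^i$ hence in $\alpha^i$), a pair $(\hat\alpha^1,\hat\alpha^2)\in\cA^1\times\cA^2$ is a Nash equilibrium if and only if each $\hat\alpha^i$ satisfies the first-order condition for the $i$-th problem with the opponent frozen. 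The first step is therefore to derive these first-order conditions: for an admissible perturbation $\beta\in\cA^i$ with $\int_0^T\beta_t\,dt=0$ (to respect the terminal constraint), compute $\frac{d}{d\epsilon}J^i(\hat\alpha^i+\epsilon\beta,\hat\alpha^{-i})\big|_{\epsilon=0}=0$. Using integration by parts on $\int_0^t\beta_s\,ds$ against the tracking term, this yields a variational identity that forces the adjoint process to be constant along the constraint directions, which after the standard manipulation produces a coupled linear FBSDE: forward equations for $\hat X^1,\hat X^2$ and backward equations for the two adjoint processes, with the terminal conditions $\hat X^i_T=\Xi^i_T$.

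The second step is to solve this coupled FBSDE explicitly. The natural device, given the symmetry of the problem under swapping the two players, is to pass to symmetric and antisymmetric coordinates $\hat X^+:=\hat X^1+\hat X^2$ and $\hat X^-:=\hat X^1-\hat X^2$ (and likewise for the adjoint processes, the targets $\xi^1\pm\xi^2$, and the terminal constraints via $M^\pm$). In these coordinates the system decouples into two independent scalar linear FBSDEs — this is precisely why the constants $\delta^\pm=\gamma^2+6\lambda\sigma$ and $\gamma^2+2\lambda\sigma$, the functions $c^\pm_t$ (Riccati solutions of two scalar Riccati ODEs with blow-up at $T$ enforcing the terminal constraint), and the weight functions $w^1_t,\dots,w^5_t$ and kernels $K^1,K^2$ appear. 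Each scalar FBSDE is solved by the usual ansatz: write the adjoint as an affine functional of the state plus a conditional-expectation term, derive a Riccati equation whose singular solution (with $\lim_{t\uparrow T}c^\pm_t=+\infty$) is $c^\pm$, and obtain the remaining term by a linear BSDE whose solution is the weighted conditional expectation of future targets, with the exponential/hyperbolic kernels coming from the fundamental solution of the associated linear ODE. Unwinding the $\pm$ substitution then gives the coupled random ODE \eqref{eq:Xsol} with $\hat\xi^i$ as in \eqref{def:optsignal1}--\eqref{def:optsignal2}; one should also verify that $K^1,K^2$ integrate to one over $[t,T]$, which is a direct computation from the explicit $\sinh$/exponential formulas.

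The third step is to verify that the candidate $(\hat X^1,\hat X^2)$ produced this way is genuinely admissible and genuinely an equilibrium. Admissibility requires (i) that the random linear ODE \eqref{eq:Xsol} has a well-defined solution with $\hat\alpha^i\in\cA$, i.e. $\EE\int_0^T(\hat\alpha^i_t)^2\,dt<\infty$ — here assumption \eqref{eq:assumption} enters, controlling the $1/(T-t)$-type singularity of $c^\pm_t$ near $T$ through the martingale-integrability of $M^\pm$, exactly as in \cite{BankSonerVoss:17}; and (ii) the terminal conditions \eqref{eq:terminalconds}, which follow from a Gronwall/Grönwall-type estimate showing $\hat X^i_t-\Xi^i_T\to 0$ because $c^\pm_t/(2\lambda)\to\infty$ fast enough relative to the regularity of the conditional expectations. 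Finally, because of the strict convexity noted above, the first-order conditions are also sufficient, so the candidate is the unique Nash equilibrium. I expect the main obstacle to be Step 3(i)–(ii): establishing the square-integrability of $\hat\alpha^i$ up to $T$ and the convergence $\hat X^i_t\to\Xi^i_T$, since this is where the blow-up of $c^\pm$ and the precise strength of hypothesis \eqref{eq:assumption} must be matched — the algebra of Step 2 is lengthy but mechanical once the $\pm$-decoupling is spotted, whereas the terminal analysis requires genuine estimates on the conditional expectations $\EE[M^\pm_\cdot\mid\cF_t]$ and the kernels near $t=T$.
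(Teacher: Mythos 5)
Your proposal is correct and follows essentially the same route as the paper: strict convexity plus the G\^ateaux-derivative first-order conditions give the coupled FBSDE characterization \eqref{eq:FBSDE}, the sum/difference coordinates $\hat{X}^1\pm\hat{X}^2$ decouple it into two scalar problems governed by the Riccati solutions $c^\pm$, and admissibility together with the terminal constraints is settled via assumption \eqref{eq:assumption}, with uniqueness from convexity. The only cosmetic differences are that the paper verifies the explicitly stated candidate rather than re-deriving it from the affine ansatz, and it proves \eqref{eq:terminalconds} with an ODE comparison principle rather than a Gronwall-type estimate.
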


The proof of Theorem~\ref{thm:main} consists of a verification that the pair $(\hat{\alpha}^1,\hat{\alpha}^2)$ with dynamics in~\eqref{eq:Xsol} is admissible (i.e., belongs to $\mathcal{A}^1\times\mathcal{A}^2$) and satisfies the FBSDE system in~\eqref{eq:FBSDE}. An explanation on how the Nash equilibrium $(\hat{\alpha}^1,\hat{\alpha}^2)$ can be constructed is provided in the appendix. 

\noindent\textbf{Proof of Theorem~\ref{thm:main}.} In view of
Lemma~\ref{lem:FOC} we merely have to show that
$(\hat{X}^1, \hat{X}^2, \hat{\alpha}^1, \hat{\alpha}^2)$ with dynamics
described in Theorem~\ref{thm:main}, equation~\eqref{eq:Xsol}, is a
solution of the FBSDE system in~\eqref{eq:FBSDE} with some suitable
square integrable martingales $(M^1_t)_{0 \leq t < T}$ and
$(M^2_t)_{0 \leq t < T}$. Uniqueness of the Nash equilibrium then
follows together with Lemma~\ref{lem:uniqueNE}.

\emph{Step 1:} We start with computing the dynamics of the controls
$\hat{\alpha}^1$ and $\hat{\alpha}^2$ in~\eqref{eq:Xsol} and verify that they satisfy the dynamics of the FBSDE system in~\eqref{eq:FBSDE}. To this end,
it is convenient to rewrite $w^1, w^2$ in~\eqref{def:weights}, as well
as $\hat{\xi}^1$ in~\eqref{def:optsignal1} and $\hat{\xi}^2$
in~\eqref{def:optsignal2} by introducing
\begin{align}
  \tilde{w}_t^1 \set
  & \; (c^+_t + c^-_t) w^1_t, \quad 
    \tilde{w}_t^2 \set (c^+_t + c^-_t) w^2_t  \quad (0 \leq t <
    T) \label{def:wtilde} \\
  \intertext{and}
  \tilde{\xi}_t^1 \set
  &\; (c^+_t + c^-_t) \hat{\xi}^1_t, \quad
    \tilde{\xi}_t^2 \set
    (c^+_t + c^-_t) \hat{\xi}^2_t  \quad (0 \leq t < T). \label{def:xitilde}
\end{align}
Moreover, setting
\begin{equation} \label{def:plusprocesses}
  \begin{aligned} 
    Y_t^+ \set & \; \int_0^t (\xi^1_s + \xi^2_s) \frac{2\sigma
    }{\sqrt{\delta^+}} e^{-\frac{\gamma}{6\lambda}(T-s)}
    \sinh(\sqrt{\delta^+}(T-s)/(3\lambda)) ds, \\
    M_t^+ \set & \; \EE\left[ \Xi^1_T + \Xi^2_T + Y_T^+ \,\vert\,
      \cF_t\right]
  \end{aligned}
\end{equation}
and
\begin{equation} \label{def:minusprocesses}
  \begin{aligned} 
    Y_t^- \set & \; \int_0^t (\xi^1_s - \xi^2_s) \frac{2\sigma
    }{\sqrt{\delta^-}} e^{\frac{\gamma}{2\lambda}(T-s)}
    \sinh(\sqrt{\delta^-}(T-s)/\lambda) ds, \hspace{27pt} \\
    M_t^- \set & \; \EE\left[ \Xi^1_T - \Xi^2_T + Y_T^- \,\vert\,
      \cF_t\right]
  \end{aligned}
\end{equation}
for all $0 \leq t \leq T$, we obtain the representations
\begin{equation} \label{eq:xitilde}
  \begin{aligned}
    \tilde{\xi}^1_t = &\; \tilde{w}^1_t ( M^+_t - Y^+_t ) +
    \tilde{w}^2_t ( M^-_t - Y^-_t ), \\
    \tilde{\xi}^2_t = &\; \tilde{w}^1_t ( M^+_t - Y^+_t ) -
    \tilde{w}^2_t ( M^-_t - Y^-_t )
  \end{aligned} \qquad (0 \leq t < T).
\end{equation}
In particular, 
\begin{equation} \label{eq:xitildediffs} \tilde{\xi}^1_t +
  \tilde{\xi}^2_t = 2 \tilde{w}^1_t ( M^+_t - Y^+_t ), \quad
  \tilde{\xi}^1_t - \tilde{\xi}^2_t = 2 \tilde{w}^2_t ( M^-_t -
    Y^-_t )
\end{equation}
on $[0,T)$. Note that $\Xi^1_T, \Xi^2_T, Y_T^+, Y_T^- \in L^2(\PP)$
implies that $(M_t^+)_{0 \leq t \leq T}$ and
$(M_t^-)_{0 \leq t \leq T}$ are square integrable martingales. Also,
observe that the processes
$Y^+, M^+, Y^-, M^- \in L^2(\PP \otimes dt)$. We can now
rewrite~\eqref{eq:Xsol} as
\begin{equation} \label{eq:XsolAlt}
  \begin{aligned} 
    \hat{\alpha}^1_t = &\; \frac{1}{2\lambda} ( \tilde{\xi}^1_t -
      c^+_t \hat{X}^2_t + c^-_t \hat{X}^2_t - c^+_t \hat{X}^1_t -
      c^-_t
      \hat{X}^1_t), \\
    \hat{\alpha}^2_t = &\; \frac{1}{2\lambda} ( \tilde{\xi}^2_t -
      c^+_t \hat{X}^1_t + c^-_t \hat{X}^1_t - c^+_t \hat{X}^2_t -
      c^-_t \hat{X}^2_t)
  \end{aligned} \qquad (0 \leq t < T).
\end{equation}
Next, for $\tilde{w}^1$, $\tilde{w}^2$ in~\eqref{def:wtilde} one can
easily check that
\begin{equation}
  (\tilde{w}^1_t)' = \tilde{w}^1_t \left(
    \frac{1}{\lambda} c^+_t - \frac{\gamma}{3\lambda} \right), \quad 
  (\tilde{w}^2_t)' = \tilde{w}^2_t \left(
    \frac{1}{\lambda} c^-_t + \frac{\gamma}{\lambda} \right)
  \quad (0 \leq t < T).
\end{equation}
Hence, by applying integration by parts in~\eqref{eq:xitilde} we
obtain the dynamics
\begin{equation} \label{eq:dynxitilde1}
  \begin{aligned}
    d\tilde{\xi}^1_t = &\; \tilde{w}^1_t ( M^+_t - Y^+_t )
    \left( \frac{1}{\lambda} c^+_t - \frac{\gamma}{3\lambda} \right)
    dt
    -\frac{2}{3} \sigma (\xi^1_t + \xi^2_t) dt  \\
    & + \tilde{w}^2_t ( M^-_t - Y^-_t ) \left(
      \frac{1}{\lambda} c^-_t + \frac{\gamma}{\lambda} \right) dt -
    2\sigma (\xi^1_t - \xi^2_t) dt \\
    & + \tilde{w}^1_t dM^+_t + \tilde{w}^2_t dM^-_t \qquad (0 \leq t <
    T)
  \end{aligned}
\end{equation}
and
\begin{equation} \label{eq:dynxitilde2}
  \begin{aligned}
    d\tilde{\xi}^2_t = &\; \tilde{w}^1_t ( M^+_t - Y^+_t )
    \left( \frac{1}{\lambda} c^+_t - \frac{\gamma}{3\lambda} \right)
    dt
    -\frac{2}{3} \sigma (\xi^1_t + \xi^2_t) dt  \\
    & - \tilde{w}^2_t ( M^-_t - Y^-_t ) \left(
      \frac{1}{\lambda} c^-_t + \frac{\gamma}{\lambda} \right) dt
    - 2
    \sigma (\xi^1_t - \xi^2_t) dt \\
    & + \tilde{w}^1_t dM^+_t - \tilde{w}^2_t dM^-_t \qquad (0 \leq t <
    T).
  \end{aligned}
\end{equation}
Now, having at hand~\eqref{eq:dynxitilde1} and~\eqref{eq:dynxitilde2},
as well as the fact that the functions $c^+, c^-$ in~\eqref{def:cpcm}
satisfy the ordinary Riccati differential equations
\begin{equation} \label{eq:riccati} (c^+_t)' =
  \frac{(c^+_t)^2}{\lambda} - \frac{\gamma}{3\lambda} c^+_t -
  \frac{2}{3} \sigma, \quad (c^-_t)' = \frac{(c^-_t)^2}{\lambda} +
  \frac{\gamma}{\lambda} c^-_t - 2\sigma \quad (0 \leq t < T),
\end{equation}
an elementary but tedious computation
reveals that the dynamics of $\hat{\alpha}^1$ and $\hat{\alpha}^2$
in~\eqref{eq:XsolAlt} on $[0,T)$ are given by
\begin{equation} \label{eq:dynalpha1}
  \begin{aligned}
    d\hat{\alpha}^1_t = & \; \hat{X}^1_t \left(
      \frac{4\sigma}{3\lambda} + \frac{\gamma}{6\lambda^2} c^+_t -
      \frac{\gamma}{2\lambda^2} c^-_t \right) dt -
    \frac{4\sigma}{3\lambda} \xi^1_t dt + \frac{\gamma}{6\lambda^2}
    \tilde{\xi}^1_t dt \\
    & + \hat{X}^2_t \left( -\frac{2\sigma}{3\lambda} +
      \frac{\gamma}{6\lambda^2} c^+_t + \frac{\gamma}{2\lambda^2} c^-_t
    \right) dt + \frac{2\sigma}{3\lambda} \xi^2_t dt -
    \frac{\gamma}{3\lambda^2}
    \tilde{\xi}^2_t dt \\
    & + \frac{\tilde{w}^1_t}{2\lambda} dM^+_t +
    \frac{\tilde{w}^2_t}{2\lambda} dM^-_t
  \end{aligned}
\end{equation}
and, similarly, by
\begin{equation} \label{eq:dynalpha2}
  \begin{aligned}
    d\hat{\alpha}^2_t = & \; \hat{X}^2_t \left(
      \frac{4\sigma}{3\lambda} + \frac{\gamma}{6\lambda^2} c^+_t -
      \frac{\gamma}{2\lambda^2} c^-_t \right) dt -
    \frac{4\sigma}{3\lambda} \xi^2_t dt + \frac{\gamma}{6\lambda^2}
    \tilde{\xi}^2_t dt \\
    & + \hat{X}^1_t \left( -\frac{2\sigma}{3\lambda} +
      \frac{\gamma}{6\lambda^2} c^+_t + \frac{\gamma}{2\lambda^2} c^-_t
    \right) dt + \frac{2\sigma}{3\lambda} \xi^1_t dt -
    \frac{\gamma}{3\lambda^2}
    \tilde{\xi}^1_t dt \\
    & + \frac{\tilde{w}^1_t}{2\lambda} dM^+_t -
    \frac{\tilde{w}^2_t}{2\lambda} dM^-_t,
  \end{aligned}
\end{equation}
where we also employed the identities in~\eqref{eq:xitildediffs}. As a
consequence, using the representations in~\eqref{eq:XsolAlt} we obtain
\begin{align*}
  & d\hat{\alpha}^1_t + \frac{1}{2} d\hat{\alpha}^2_t \\
  & =
    \frac{\sigma}{\lambda} (\hat{X}^1_t - \xi^1_t) dt -
    \frac{\gamma}{4\lambda^2} (\tilde{\xi}^2_t -
    c^+_t \hat{X}^1_t + c^-_t \hat{X}^1_t - c^+_t \hat{X}^2_t -
    c^-_t \hat{X}^2_t ) dt \\
  & \phantom{=}  + \frac{3}{4\lambda} \tilde{w}^1_t dM^+_t +
    \frac{1}{4\lambda} 
    \tilde{w}^2_t dM^-_t \\
  & = \frac{\sigma}{\lambda} (\hat{X}^1_t - \xi^1_t) dt
    - \frac{\gamma}{2\lambda} \hat{\alpha}^2_t dt + \frac{3}{4\lambda}
    \tilde{w}^1_t dM^+_t + \frac{1}{4\lambda}
    \tilde{w}^2_t dM^-_t \qquad (0 \leq t < T)
\end{align*}
and
\begin{align*}
  & d\hat{\alpha}^2_t + \frac{1}{2} d\hat{\alpha}^1_t \\
  & =
    \frac{\sigma}{\lambda} (\hat{X}^2_t - \xi^2_t) dt -
    \frac{\gamma}{4\lambda^2} (\tilde{\xi}^1_t -
    c^+_t \hat{X}^2_t + c^-_t \hat{X}^2_t - c^+_t \hat{X}^1_t -
    c^-_t \hat{X}^1_t ) dt \\
  & \phantom{=} + \frac{3}{4\lambda} \tilde{w}^1_t dM^+_t -
    \frac{1}{4\lambda} 
    \tilde{w}^2_t dM^-_t \\
  & = \frac{\sigma}{\lambda} (\hat{X}^2_t - \xi^2_t) dt
    - \frac{\gamma}{2\lambda} \hat{\alpha}^1_t dt + \frac{3}{4\lambda}
    \tilde{w}^1_t dM^+_t - \frac{1}{4\lambda}
    \tilde{w}^2_t dM^-_t \qquad (0 \leq t < T).
\end{align*}
In other words, the pair $(\hat{\alpha}^1, \hat{\alpha}^2)$ described
in~\eqref{eq:Xsol} satisfies the dynamics of the FBSDE system
in~\eqref{eq:FBSDE}, where $\int_0^\cdot \tilde{w}_t^{1} dM_t^{+}$,
$\int_0^\cdot \tilde{w}_t^{2} dM_t^{-}$ are square integrable
martingales on $[0,T)$ providing the ingredients for $M^1$ and $M^2$.

\emph{Step 2:} Next, we have to check the terminal conditions of the
FBSDE system in~\eqref{eq:FBSDE}, that is,
$\lim_{t \uparrow T} \hat{X}^1_t = \Xi^1_T$ and
$\lim_{t \uparrow T} \hat{X}^2_t = \Xi^2_T$ $\mathbb{P}$-a.s. holds
true for the pair of solutions $(\hat{X}^1, \hat{X}^2)$ of the coupled
ODE in~\eqref{eq:Xsol}. We adapt the argumentation
from~\citet{BankSonerVoss:17} which employs a simple comparison
principle for ordinary differential equations to our current
setting. Specifically, note that it suffices to show that
\begin{align}
  \lim_{t \uparrow T} (\hat{X}^1_t + \hat{X}^2_t) = & \; \Xi^1_T +
  \Xi^2_T\quad \PP\text{-a.s. and} \label{eq:plusLimit} \\
  \lim_{t \uparrow T} (\hat{X}^1_t - \hat{X}^2_t) = & \; \Xi^1_T - \Xi^2_T
  \quad \PP\text{-a.s.}, \label{eq:minusLimit}
\end{align}
where, using the dynamics in~\eqref{eq:Xsol} and the definition of
$w^5$ in~\eqref{def:weights}, the processes $\hat{X}^1 + \hat{X}^2$
and $\hat{X}^1 - \hat{X}^2$ satisfy, respectively, the ODE
\begin{equation}
  \begin{aligned} \label{eq:XodePlus}
    d(\hat{X}^1_t + \hat{X}^2_t) = &\; \frac{c^+_t + c^-_t}{2\lambda}
    \left( \hat{\xi}^1_t + \hat{\xi}^2_t - w^5_t \hat{X}^1_t - w^5_t
      \hat{X}^2_t -
      \hat{X}^1_t - \hat{X}^2_t \right) dt \\
    = & \; \frac{c^+_t}{\lambda} \left( \frac{\hat{\xi}^1_t +
        \hat{\xi}^2_t}{1+w^5_t} - (\hat{X}^1_t + \hat{X}^2_t) \right) dt
    \quad (0 \leq t <T)
  \end{aligned}
\end{equation}
and
\begin{equation} \label{eq:XodeMinus}
  \begin{aligned}
    d(\hat{X}^1_t - \hat{X}^2_t) = &\; \frac{c^+_t + c^-_t}{2\lambda}
    \left( \hat{\xi}^1_t - \hat{\xi}^2_t + w^5_t \hat{X}^1_t - w^5_t
      \hat{X}^2_t -
      \hat{X}^1_t + \hat{X}^2_t \right) dt \\
    = & \; \frac{c^-_t}{\lambda} \left( \frac{\hat{\xi}^1_t -
        \hat{\xi}^2_t}{1-w^5_t} - (\hat{X}^1_t - \hat{X}^2_t) \right) dt
    \quad (0 \leq t <T). 
  \end{aligned}
\end{equation}
Note that $w^5_t \in (-1,1)$ for all $t \in [0,T]$ by virtue of
Lemma~\ref{lem:weights} 1.). First, analogously to~\eqref{eq:xitilde}
let us rewrite $\hat{\xi}^1$ and $\hat{\xi}^2$
in~\eqref{def:optsignal1} and~\eqref{def:optsignal2} as
\begin{equation} \label{eq:xi}
  \begin{aligned}
    \hat{\xi}^1_t = &\; w^1_t ( M^+_t - Y^+_t ) +
    w^2_t ( M^-_t - Y^-_t ), \\
    \hat{\xi}^2_t = &\; w^1_t ( M^+_t - Y^+_t ) -
    w^2_t ( M^-_t - Y^-_t )
  \end{aligned} \qquad (0 \leq t \leq T)
\end{equation}
with $Y^+, M^+,Y^-,M^-$ as defined in~\eqref{def:plusprocesses}
and~\eqref{def:minusprocesses}. Hence, we can consider a
c\`adl\`ag version of the processes
$(\hat{\xi}^1_t)_{0 \leq t \leq T}$ and
$(\hat{\xi}^2_t)_{0 \leq t \leq T}$ and obtain, together with
Lemma~\ref{lem:weights}, 2.), the $\PP$-a.s. limits
\begin{equation*}
  \begin{aligned}
    \lim_{t \uparrow T} \hat{\xi}^1_t = & \; \frac{1}{2} \EE[\Xi^1_T +
    \Xi^2_T \, \vert \, \cF_{T-}] + \frac{1}{2} \EE[\Xi^1_T - \Xi^2_T
    \, \vert \, \cF_{T-}] = \Xi^1_T \quad \text{and} \\
    \lim_{t \uparrow T} \hat{\xi}^2_t = & \; \frac{1}{2} \EE[\Xi^1_T +
    \Xi^2_T \, \vert \, \cF_{T-}] - \frac{1}{2} \EE[\Xi^1_T - \Xi^2_T
    \, \vert \, \cF_{T-}] = \Xi^2_T
  \end{aligned}
\end{equation*}
due to $\cF_{T-}$-measurability of $\Xi^1_T$ and $\Xi^2_T$ by virtue
of our assumption in~\eqref{eq:assumption}. In particular, since
$\lim_{t\uparrow T} w^5_t = 0$ because of Lemma~\ref{lem:weights},
2.), it also holds that
\begin{equation} \label{eq:xilimits}
  \lim_{t \uparrow T} \frac{\hat{\xi}^1_t + \hat{\xi}^2_t}{1+w^5_t} = \Xi^1_T +
  \Xi^2_T \quad \ttext{and} \quad   \lim_{t \uparrow T} \frac{\hat{\xi}^1_t
  - \hat{\xi}^2_t}{1-w^5_t} = \Xi^1_T - 
  \Xi^2_T \quad \PP\text{-a.s.}
\end{equation}

Let us now start with proving the limit in~\eqref{eq:plusLimit}. As a
consequence of~\eqref{eq:xilimits}, for every $\varepsilon > 0$ there
exists a (random) time $\tau_\varepsilon \in [0,T)$ such that
$\PP$-a.s.
\begin{equation} \label{eq:epsilonBounds1}
\Xi^1_T + \Xi^2_T - \varepsilon \leq \frac{\hat{\xi}^1_t +
  \hat{\xi}^2_t}{1+w^5_t}  \leq \Xi^1_T + \Xi^2_T + \varepsilon \quad
\text{for all } t \in [\tau_\varepsilon ,T).
\end{equation}
Next, define $Y^{+,\varepsilon}_t \set \Xi^1_T + \Xi^2_T + \varepsilon -
(\hat{X}^1_t + \hat{X}^2_t)$ for all $t \in [0,T)$ so that
\begin{equation}
Y^{+,\varepsilon}_t \geq \frac{\hat{\xi}^1_t +
  \hat{\xi}^2_t}{1+w^5_t}  - (\hat{X}^1_t + \hat{X}^2_t) \quad
\text{for all } t \in [\tau_\varepsilon ,T).
\end{equation}
Together with the dynamics of $\hat{X}^1+\hat{X}^2$ in~\eqref{eq:XodePlus} this yields
\begin{equation} \label{eq:compODE1}
  \begin{aligned}
    d Y^{+,\varepsilon}_t = & \; -d(\hat{X}^1_t + \hat{X}^2_t) = -
    \frac{c^+_t}{\lambda} \left( \frac{\hat{\xi}^1_t +
        \hat{\xi}^2_t}{1+w^5_t} - (\hat{X}^1_t +
      \hat{X}^2_t) \right)  dt \\
    \geq & \; -\frac{c_t^+}{\lambda} Y^{+,\varepsilon}_t dt \quad
    \text{on } [\tau_\varepsilon ,T).
  \end{aligned}
\end{equation}
Moreover, since for all $\omega \in \Omega$ the linear ODE on
$[\tau_\varepsilon(\omega),T)$ given by
\begin{equation*}
Z^{+,\varepsilon}_{\tau_\varepsilon(\omega)} =
Y^{+,\varepsilon}_{\tau_\varepsilon(\omega)}(\omega), \quad dZ^{+,\varepsilon}_t =
-\frac{c_t^+}{\lambda} Z^{+,\varepsilon}_t dt 
\end{equation*}
admits the solution
\begin{align*}
  Z^{+,\varepsilon}_t  =  &\;
                            Y^{+,\varepsilon}_{\tau_\varepsilon(\omega)}(\omega)
                            \cdot 
                            e^{-\int_{\tau_\varepsilon}^{t} \frac{c^+_s}{\lambda} ds} \\
  = &\;
      Y^{+,\varepsilon}_{\tau_\varepsilon}(\omega) \cdot 
      e^{-\frac{\gamma}{6\lambda} (t-\tau_\varepsilon)} \cdot
      \frac{\sinh(\sqrt{\delta^+}(T-t)/(3\lambda))}
      {\sinh(\sqrt{\delta^+}(T-\tau_{\varepsilon})/(3\lambda))} \quad
      (\tau_\varepsilon \leq t < T)
\end{align*}
with $\lim_{t \uparrow T} Z^{+,\varepsilon}_t = 0$, the comparison
principle for ODEs in~\eqref{eq:compODE1} implies that
$Y^{+,\varepsilon}_t  \geq Z^{+,\varepsilon}_t$ for all $t \in
[\tau_\varepsilon, T)$ and thus
\begin{equation*}
\liminf_{t \uparrow T} Y^{+,\varepsilon}_t \geq \lim_{t \uparrow T}
Z^{+,\varepsilon}_t = 0 \quad \PP\text{-a.s.},
\end{equation*}
or, equivalently,
\begin{equation} \label{eq:limsup}
\limsup_{t \uparrow T} (\hat{X}^1_t + \hat{X}^2_t)  \leq \Xi^1_T +
\Xi^2_T + \varepsilon \quad \PP\text{-a.s.}
\end{equation}
Next, in a similar way, set
$\tilde{Y}^{+,\varepsilon}_t \set \Xi^1_T + \Xi^2_T - \varepsilon -
(\hat{X}^1_t + \hat{X}^2_t)$ for all $t \in [0,T)$ and observe as
above from~\eqref{eq:epsilonBounds1} that $\PP$-a.s. on
$[\tau_\varepsilon, T)$ it holds that
$d\tilde{Y}^{+,\varepsilon}_t \leq -\frac{c_t^+}{\lambda}
\tilde{Y}^{+,\varepsilon}_t dt$ and hence
\begin{equation*}
\limsup_{t \uparrow T} \tilde{Y}^{+,\varepsilon}_t \leq \lim_{t \uparrow T}
Z^{+,\varepsilon}_t \leq 0 \quad \PP\text{-a.s.}
\end{equation*}
by the comparison principle. That is,
\begin{equation*}
\liminf_{t \uparrow T} (\hat{X}^1_t + \hat{X}^2_t)  \geq \Xi^1_T +
\Xi^2_T - \varepsilon \quad \PP\text{-a.s.},
\end{equation*}
which, together with~\eqref{eq:limsup} yields the limit
in~\eqref{eq:plusLimit}. 

In fact, it can now be argued along the same lines as above that also
the limit in~\eqref{eq:minusLimit} holds true. Indeed, simply note
that~\eqref{eq:xilimits} implies similar to~\eqref{eq:epsilonBounds1}
that $\PP$-a.s. for every $\varepsilon > 0$ there exists a (random)
time $\tau'_\varepsilon \in [0,T)$ such that
\begin{equation*}
  \Xi^1_T - \Xi^2_T - \varepsilon \leq \frac{\hat{\xi}^1_t -
    \hat{\xi}^2_t}{1-w^5_t}  \leq \Xi^1_T - \Xi^2_T + \varepsilon \quad
  \text{for all } t \in [\tau'_\varepsilon ,T).
\end{equation*}
Then, introduce the processes
$Y^{-,\varepsilon}_t \set \Xi^1_T - \Xi^2_T + \varepsilon -
(\hat{X}^1_t - \hat{X}^2_t)$ and
$\tilde{Y}^{-,\varepsilon}_t \set \Xi^1_T - \Xi^2_T - \varepsilon -
(\hat{X}^1_t - \hat{X}^2_t)$ for all $t \in [0,T)$. By using the
dynamics of $\hat{X}^1 - \hat{X}^2$ in~\eqref{eq:XodeMinus} we can
once more apply the comparison principle on the interval
$[\tau'_\varepsilon,T)$ for the ODEs of $Y^{-,\varepsilon}$ and
$\tilde{Y}^{-,\varepsilon}$ together with the linear ODE
\begin{equation*}
  Z^{-,\varepsilon}_{\tau_\varepsilon} =
  z \in \mathbb{R}, \quad dZ^{-,\varepsilon}_t =
  -\frac{c_t^-}{\lambda} Z^{-,\varepsilon}_t dt, 
\end{equation*}
which admits the solution
\begin{equation*}
  Z^{-,\varepsilon}_t  = z
  e^{-\int_{\tau_\varepsilon'}^{t}
    \frac{c^-_s}{\lambda} ds} = z^-
  e^{\frac{\gamma}{2\lambda} (t-\tau_\varepsilon)}
  \frac{\sinh(\sqrt{\delta^-}(T-t)/\lambda)}
  {\sinh(\sqrt{\delta^-}(T-\tau_{\varepsilon}')/\lambda)} \quad
  (\tau'_\varepsilon \leq t < T)
\end{equation*}
such that $\lim_{t \uparrow T} Z^{-,\varepsilon}_t = 0$ to finally conclude
that
\begin{equation*}
\Xi^1_T -
\Xi^2_T - \varepsilon \leq \liminf_{t \uparrow T} (\hat{X}^1_t - \hat{X}^2_t) \leq \limsup_{t \uparrow T} (\hat{X}^1_t - \hat{X}^2_t) \leq \Xi^1_T -
\Xi^2_T + \varepsilon
\end{equation*}
as desired.

\emph{Step 3:} It is left to argue that the controls
$\hat{\alpha}^1, \hat{\alpha}^2$ described in~\eqref{eq:Xsol} belong
to the set $\cA$ in~\eqref{def:setA}, i.e.,
$\hat{\alpha}^1, \hat{\alpha}^2 \in L^2(\PP\otimes dt)$. To achieve
this we will follow a similar strategy as
in~\citet{BankSonerVoss:17}. For simplicity, we will assume without
loss of generality that $x^1=x^2=0$. Because of the coupling of
$\hat{\alpha}^1, \hat{\alpha}^2$ in~\eqref{eq:Xsol} it is more
convenient to prove that
$\hat{\alpha}^+ \set \hat{\alpha}^1 + \hat{\alpha}^2 \in
L^2(\PP\otimes dt)$ and
$\hat{\alpha}^- \set \hat{\alpha}^1 - \hat{\alpha}^2 \in
L^2(\PP\otimes dt)$, where we set
$\hat{X}^+_\cdot \set \int_0^\cdot \hat{\alpha}^+_s ds$ and
$\hat{X}^-_\cdot \set \int_0^\cdot \hat{\alpha}^-_s ds$. Recall
from~\eqref{eq:XodePlus} and~\eqref{eq:XodeMinus} above that we then
have
\begin{equation} 
  \hat{\alpha}^+_t = \frac{c^+_t}{\lambda} \left(
    \frac{\hat{\xi}^1_t + 
      \hat{\xi}^2_t}{1+w^5_t} - \hat{X}^+_t\right), \quad
  \hat{\alpha}^-_t = \frac{c^-_t}{\lambda} \left( \frac{\hat{\xi}^1_t -
      \hat{\xi}^2_t}{1-w^5_t} - \hat{X}^-_t \right) \label{eq:alphapm}
\end{equation}
on $[0,T)$, where
\begin{equation} \label{eq:xipm}
    \hat{\xi}^1_t + \hat{\xi}^2_t = 2 w^1_t (M^+_t - Y^+_t), \quad
    \hat{\xi}^1_t - \hat{\xi}^2_t = 2 w^2_t (M^-_t - Y^-_t) \quad (0 \leq t \leq T)
\end{equation}
because of~\eqref{eq:xi} (recall that $M^+,Y^+$ are given in~\eqref{def:plusprocesses} and $M^-,Y^-$ are given in~\eqref{def:minusprocesses}).

We start with showing that $\hat{\alpha}^+ \in L^2(\PP\otimes
dt)$. For this purpose, observe that it suffices to examine the following
two cases $\xi^1\equiv\xi^2\equiv 0$ and $\Xi^1_T=\Xi^2_T=0$ separately. Indeed, let us denote $\hat{\alpha}^{+,\xi^1,\xi^2,\Xi^1,\Xi^2} \triangleq \hat{\alpha}^{+}$ to emphasize also the dependence on $\xi^1,\xi^2,\Xi^1,\Xi^2$. Then, due to the linear dependence of $\hat{\alpha}^+$ in~\eqref{eq:alphapm} on $\xi^1,\xi^2,\Xi^1,\Xi^2$, it holds that
\begin{equation} \label{eq:alphaplusdec}
    \hat{\alpha}^{+,\xi^1,\xi^2,\Xi^1,\Xi^2} = \hat{\alpha}^{+,0,0,\Xi^1,\Xi^2} + \hat{\alpha}^{+,\xi^1,\xi^2,0,0}.
\end{equation}
Hence, it suffices to show that $\hat{\alpha}^{+,0,0,\Xi^1,\Xi^2} \in L^2(\PP\otimes
dt)$ and $\hat{\alpha}^{+,\xi^1,\xi^2,0,0} \in L^2(\PP\otimes
dt)$.

\underline{Case 1.1:} $\xi^1\equiv\xi^2\equiv 0$:

From~\eqref{eq:xipm} it follows that
$\hat{\xi}^1_t + \hat{\xi}^2_t = 2 w^1_t M^+_t$. Moreover, the
explicit solutions in~\eqref{eq:explSolX1} and~\eqref{eq:explSolX2}
yield
\begin{equation} \label{eq:Xplus}
  \begin{aligned}
    \hat{X}^+_t = & \; e^{-\int_0^t \frac{c^+_u}{\lambda} du} \int_0^t
    \frac{c_s^+ + c^-_s}{\lambda} w^1_s M^+_s e^{\int_0^s
      \frac{c^+_u}{\lambda} du} ds \\
    = & \; e^{\frac{\gamma}{6\lambda} (T-t)}
    \sinh(\sqrt{\delta^+}(T-t)/(3\lambda))  \\
    & \int_0^t M^+_s \frac{\sqrt{\delta^+}}{3\lambda
      \sinh(\sqrt{\delta^+}(T-s)/(3\lambda))^2} ds \qquad (0 \leq t < T).
  \end{aligned}
\end{equation}
Introducing the deterministic and differentiable function
$f^+_s \set 1/\sinh(\sqrt{\delta^+}(T-s)/(3\lambda))$ on $[0,T)$
allows to rewrite the integral in~\eqref{eq:Xplus} by applying
integration by parts as
\begin{align}
  & \int_0^t
    M^+_s \frac{\sqrt{\delta^+}}{3\lambda
    \sinh(\sqrt{\delta^+}(T-s)/(3\lambda))^2} ds = \int_0^t
    \tilde{M}^+_s df^+_s \nonumber \\
  & = \tilde{M}^+_t f^+_t - \tilde{M}^+_0 f^+_0 - \int_0^t f_s^+
    d\tilde{M}^+_s \qquad (0 \leq t < T), \label{eq:intMplus}
\end{align}
where $\tilde{M}^+_t \set
M_t^+/\cosh(\sqrt{\delta^+}(T-t)/(3\lambda))$ for all $t \in
[0,T)$. Moreover, we have that
\begin{equation} \label{eq:xiplus}
  \frac{\hat{\xi}^1_t + \hat{\xi}^2_t}{1+w^5_t} =
  \frac{\sqrt{\delta^+} e^{\frac{\gamma}{6\lambda}(T-t)}}{3 c^+_t
    \sinh(\sqrt{\delta^+}(T-t)/(3\lambda))} M^+_t \quad (0 \leq t \leq T).
\end{equation}
Now, plugging back~\eqref{eq:xiplus} and~\eqref{eq:Xplus} together
with~\eqref{eq:intMplus} into $\hat{\alpha}^+$ in~\eqref{eq:alphapm}
yields, after some elementary computations,
\begin{equation}
  \begin{aligned} \label{eq:alphaplus1}
    \hat{\alpha}^+_t =
    & \; -\frac{\gamma}{6\lambda}
    e^{\frac{\gamma}{6\lambda}(T-t)} \tilde{M}^+_t + \frac{c^+_t}{\lambda}
    e^{\frac{\gamma}{6\lambda} (T-t)}
    \sinh(\sqrt{\delta^+}(T-t)/(3\lambda)) \tilde{M}^+_0 f^+_0 \\
    & \; + \frac{c^+_t}{\lambda}
    e^{\frac{\gamma}{6\lambda} (T-t)}
    \sinh(\sqrt{\delta^+}(T-t)/(3\lambda)) \int_0^t f_s^+
    d\tilde{M}^+_s \quad (0 \leq t < T).
  \end{aligned}
\end{equation}
In fact, since $c^+_t \sinh(\sqrt{\delta^+}(T-t)/(3\lambda))$ is
bounded on $[0,T]$ (recall from~\eqref{def:cpcm} that $c^+_t = \frac{1}{3} \sqrt{\delta^+} \coth(\sqrt{\delta^+}(T-t)/(3\lambda)) + \frac{1}{6}\gamma$) and $\tilde{M}^+ \in L^2(\PP \otimes dt)$ (recall
that $M^+$ in~\eqref{def:plusprocesses} belongs to
$L^2(\PP \otimes dt)$) the first two terms in~\eqref{eq:alphaplus1}
are in $L^2(\PP \otimes dt)$. For the stochastic integral, we obtain
\begin{align*}
  \int_0^t f_s^+ d\tilde{M}^+_s =
  & \; \int_0^t \frac{\sqrt{\delta^+}
    M_s^+}{3\lambda
    \cosh(\sqrt{\delta^+}(T-s)/(3\lambda))^2}
    ds                                  
  \\  
  & \; + \int_0^t
    \frac{\tilde{f}_s^+}{\cosh(\sqrt{\delta^+}(T-s)/(3\lambda))} dM^+_s, 
\end{align*}
where the first integral on the right is again an element of
$L^2(\PP \otimes dt)$. The second integral satisfies
\begin{equation} \label{eq:integralMplus}
  \begin{aligned}
    & \EE \left[ \int_0^T \left( \int_0^t
        \frac{\tilde{f}_s^+}{\cosh(\sqrt{\delta^+}(T-s)/(3\lambda))}
        dM^+_s \right)^2 dt \right] \\
    & = \EE \left[ \int_0^T \int_0^t \left(
        \frac{\tilde{f}_s^+}{\cosh(\sqrt{\delta^+}(T-s)/(3\lambda))}
      \right)^2
      d\langle M^+ \rangle_s dt \right] \\
    & = \EE \left[ \int_0^T (T-s)
      \frac{(\tilde{f}_s^+)^2}{\cosh(\sqrt{\delta^+}(T-s)/(3\lambda))^2}
      d\langle M^+ \rangle_s \right] \\
    & \leq \frac{9\lambda^2}{\delta^+} \EE \left[ \int_0^T
      \frac{1}{T-s} d\langle M^+ \rangle_s \right] < \infty
  \end{aligned}
\end{equation}
by our assumption in~\eqref{eq:assumption}, where we also used
Fubini's theorem twice and the fact that $\sinh(\tau) \geq \tau$ and
$\cosh(\tau) \geq 1$ for all $\tau \geq 0$. That is, we obtain that
$\hat{\alpha}^+ \in L^2(\PP \otimes dt)$ in this case.

\underline{Case 1.2:} $\Xi^1_T=\Xi^2_T=0$:

In this case, we obtain from the expressions in~\eqref{def:optsignal1}
and~\eqref{def:optsignal2} that
\begin{equation*}
  \hat{\xi}^1_t + \hat{\xi}^2_t = 2 w^3_t \EE \left[ \int_t^T (\xi^1_u +
    \xi^2_u) K^1(t,u) du \, \Big\vert \, \cF_t \right] \quad (0 \leq t
  \leq T)
\end{equation*}
and thus, using again the explicit representation for
$\hat{X}^+=\hat{X}^1 + \hat{X}^2$ from~\eqref{eq:explSolX1}
and~\eqref{eq:explSolX2}, $\hat{\alpha}^+$ in~\eqref{eq:alphapm}
becomes
\begin{align}
  &\hat{\alpha}^+_t = \frac{c^+_t}{\lambda} \left(
    \frac{\hat{\xi}^1_t + 
    \hat{\xi}^2_t}{1+w^5_t} - \hat{X}^+_t\right) \nonumber \\
  & = \frac{2c^+_tw^3_t}{\lambda (1+w^5_t)} \EE \left[ \int_t^T (\xi^1_u +
    \xi^2_u) K^1(t,u) du \, \bigg\vert \, \cF_t \right] \nonumber \\
  & \quad - \frac{c^+_t}{\lambda} e^{-\int_0^t \frac{c^+_u}{\lambda} du}
  \nonumber \\
  & \qquad \int_0^t \frac{(c_s^+ + c^-_s)w^3_s}{\lambda} e^{\int_0^s
    \frac{c^+_u}{\lambda} du} \EE \left[ \int_s^T (\xi^1_u +
    \xi^2_u) K^1(s,u) du \, \bigg\vert \, \cF_s \right]  ds. \label{eq:alphaplus2}
\end{align}
In fact, it holds that all the ratios in~\eqref{eq:alphaplus2}
involving $c^+$, $c^-$ are bounded on $[0,T]$. Moreover, by
Lemma~\ref{lem:bounds} we have
\begin{equation*}
  \EE \left[ \int_t^T (\xi^1_u +
    \xi^2_u) K^1(t,u) du \, \bigg\vert \, \cF_t \right]  \in L^2(\PP
  \otimes dt),
\end{equation*}
as well as
\begin{align*}
  & \EE \left[ \int_0^T \left(  \int_0^t \EE \left[ \int_s^T (\xi^1_u +
  \xi^2_u) K^1(s,u) du \, \bigg\vert \, \cF_s \right] ds \right)^2 dt
  \right] \\
  & \leq \frac{T^2}{2} \EE \left[ \int_0^T \left(  \EE \left[ \int_s^T (\xi^1_u +
  \xi^2_u) K^1(s,u) du \, \bigg\vert \, \cF_s \right] \right)^2 ds
  \right] < \infty
\end{align*}
by using Jensen's inequality. As a consequence, we can also conclude
in this case that $\hat{\alpha}^+$ belongs to $L^2(\PP \otimes dt)$.

Let us now argue that also $\hat{\alpha}^-$ in~\eqref{eq:alphapm}
belongs to $L^2(\PP\otimes dt)$. The argumentation is very similar to
the one presented above so that we only sketch the main steps. Again,
it is enough to investigate the following
two cases $\xi^1\equiv\xi^2\equiv 0$ and $\Xi^1_T=\Xi^2_T=0$ separately because $\hat{\alpha}^-$ in~\eqref{eq:alphapm} can similarly be  decomposed as $\hat{\alpha}^+$ in~\eqref{eq:alphaplusdec}.

\underline{Case 2.1:} $\xi^1\equiv\xi^2\equiv 0$:

Similar to~\eqref{eq:Xplus} above, using
$\hat{\xi}^1_t - \hat{\xi}^2_t = 2 w^2_t M^-_t$ from~\eqref{eq:xipm} we
obtain via~\eqref{eq:explSolX1} and~\eqref{eq:explSolX2} the
representation
\begin{equation} \label{eq:Xminus}
  \begin{aligned}
    \hat{X}^-_t = & \; e^{-\int_0^t \frac{c^-_u}{\lambda} du} \int_0^t
    \frac{c_s^+ + c^-_s}{\lambda} w^2_s M^-_s e^{\int_0^s
      \frac{c^-_u}{\lambda} du} ds \\
    = & \; e^{-\frac{\gamma}{2\lambda} (T-t)}
    \sinh(\sqrt{\delta^-}(T-t)/\lambda)  \\
    & \int_0^t M^-_s \frac{\sqrt{\delta^-}}{\lambda
      \sinh(\sqrt{\delta^-}(T-s)/\lambda)^2} ds \qquad (0 \leq t < T).
  \end{aligned}
\end{equation}
Setting $f^-_s \set 1/\sinh(\sqrt{\delta^-}(T-s)/\lambda)$ on $[0,T)$
we can rewrite the integral in~\eqref{eq:Xminus} as
\begin{equation}
  \int_0^t
  \tilde{M}^-_s df^-_s = \tilde{M}^-_t f^-_t - \tilde{M}^-_0 f^-_0 -
  \int_0^t f_s^- 
  d\tilde{M}^-_s \qquad (0 \leq t < T) \label{eq:intMminus}
\end{equation}
with $\tilde{M}^-_t \set M_t^-/\cosh(\sqrt{\delta^-}(T-t)/\lambda)$
for all $t \in [0,T)$. In addition,
\begin{equation} \label{eq:ximinus} \frac{\hat{\xi}^1_t -
    \hat{\xi}^2_t}{1-w^5_t} = \frac{\sqrt{\delta^-}
    e^{-\frac{\gamma}{2\lambda}(T-t)}}{c^-_t
    \sinh(\sqrt{\delta^-}(T-t)/\lambda)} M^-_t \quad (0 \leq t \leq
  T).
\end{equation}
Inserting~\eqref{eq:ximinus} and~\eqref{eq:Xminus} together
with~\eqref{eq:intMminus} into $\hat{\alpha}^-$ in~\eqref{eq:alphapm}
then yields
\begin{equation}
  \begin{aligned} \label{eq:alphaminus1}
    \hat{\alpha}^-_t =
    & \; \frac{\gamma}{2\lambda}
    e^{-\frac{\gamma}{2\lambda}(T-t)} \tilde{M}^-_t + \frac{c^-_t}{\lambda}
    e^{-\frac{\gamma}{2\lambda} (T-t)}
    \sinh(\sqrt{\delta^-}(T-t)/\lambda) \tilde{M}^-_0 f^-_0 \\
    & \; + \frac{c^-_t}{\lambda}
    e^{-\frac{\gamma}{2\lambda} (T-t)}
    \sinh(\sqrt{\delta^-}(T-t)/\lambda) \int_0^t f_s^-
    d\tilde{M}^-_s \quad (0 \leq t < T),
  \end{aligned}
\end{equation}
where
\begin{align}
  \int_0^t f_s^- d\tilde{M}^-_s =
  & \; \int_0^t \frac{\sqrt{\delta^-}
    M_s^-}{\lambda
    \cosh(\sqrt{\delta^-}(T-s)/\lambda)^2}
    ds                                  
  \nonumber \\  
  & \; + \int_0^t
    \frac{\tilde{f}_s^-}{\cosh(\sqrt{\delta^-}(T-s)/\lambda)}
    dM^-_s. \label{eq:stochIntMminus} 
\end{align}
Observe as in~\eqref{eq:alphaplus1} above that
$c^-_t \sinh(\sqrt{\delta^-}(T-t)/\lambda)$ is bounded on $[0,T]$ (recall from~\eqref{def:cpcm} that $c^-_t = \sqrt{\delta^-}
    \coth(\sqrt{\delta^-}(T-t)/\lambda)-\frac{1}{2}\gamma$) and
that $\tilde{M}^- \in L^2(\PP \otimes dt)$. Therefore, we only need to
justify that the stochastic integral in~\eqref{eq:stochIntMminus}
belongs to $L^2(\PP \otimes dt)$. Indeed, by the same computations as
in~\eqref{eq:integralMplus}, we obtain via our assumption
in~\eqref{eq:assumption} that
\begin{equation} \label{eq:integralMminus}
  \begin{aligned}
    & \EE \left[ \int_0^T \left( \int_0^t
        \frac{\tilde{f}_s^-}{\cosh(\sqrt{\delta^-}(T-s)/\lambda)}
        dM^-_s \right)^2 dt \right] \\
    & \leq \frac{\lambda^2}{\delta^-} \EE \left[ \int_0^T
      \frac{1}{T-s} d\langle M^- \rangle_s \right] < \infty.
  \end{aligned}
\end{equation}
Hence, we can conclude that $\hat{\alpha}^- \in L^2(\PP \otimes dt)$
in this case. 

\underline{Case 2.2:} $\Xi^1_T=\Xi^2_T=0$:

Here, similar to~\eqref{eq:alphaplus2} above,~\eqref{def:optsignal1}
and~\eqref{def:optsignal2} imply that
\begin{equation*}
  \hat{\xi}^1_t - \hat{\xi}^2_t = 2 w^4_t \EE \left[ \int_t^T (\xi^1_u -
    \xi^2_u) K^2(t,u) du \, \Big\vert \, \cF_t \right] \quad (0 \leq t
  \leq T)
\end{equation*}
and hence, together with $\hat{X}^-=\hat{X}^1 - \hat{X}^2$
from~\eqref{eq:explSolX1} and~\eqref{eq:explSolX2}, $\hat{\alpha}^-$
in~\eqref{eq:alphapm} can be written as
\begin{align}
  &\hat{\alpha}^-_t = \frac{c^-_t}{\lambda} \left(
    \frac{\hat{\xi}^1_t - 
    \hat{\xi}^2_t}{1-w^5_t} - \hat{X}^-_t\right) \nonumber \\
  & = \frac{2c^-_tw^4_t}{\lambda (1-w^5_t)} \EE \left[ \int_t^T (\xi^1_u -
    \xi^2_u) K^2(t,u) du \, \bigg\vert \, \cF_t \right] \nonumber \\
  & \quad - \frac{c^-_t}{\lambda} e^{-\int_0^t \frac{c^-_u}{\lambda} du}
    \nonumber \\
  & \qquad \int_0^t \frac{(c_s^+ + c^-_s)w^4_s}{\lambda} e^{\int_0^s
    \frac{c^-_u}{\lambda} du} \EE \left[ \int_s^T (\xi^1_u -
    \xi^2_u) K^2(s,u) du \, \bigg\vert \, \cF_s \right]  ds. \label{eq:alphaminus2}
\end{align}
As in~\eqref{eq:alphaplus2}, all the ratios in~\eqref{eq:alphaminus2}
involving the functions $c^+$, $c^-$ are bounded on $[0,T]$, and we
can conclude along the same lines as in step 2.1 by virtue of
Lemma~\ref{lem:bounds} that $\hat{\alpha}^- \in L^2(\PP \otimes dt)$
in this case as well.

\emph{Step 4:} Finally, we have to argue that the functions $K^1(t,u)$ and $K^2(t,u)$ defined in~\eqref{def:kernels} are nonnegative kernels which integrate to one over $[t,T)$ as functions in $u \in [t,T)$. To this end, observe that $c^+_t > 0$ and $c^-_t > 0$ for
all $t \in [0,T]$, which implies that $w^1_\cdot, w^2_\cdot > 0$ on
$[0,T)$. Moreover, a direct computation yields that for all
$t \in [0,T)$ we have
\begin{equation} \label{eq:kernelIntegral} 
  \begin{aligned}
    0 < & \; \int_t^T \frac{2\sigma }{\sqrt{\delta^+}}
    e^{-\frac{\gamma}{6\lambda}(T-u)}
    \sinh(\sqrt{\delta^+}(T-u)/(3\lambda)) du =
    \frac{w^3_t}{w^1_t}, \\
    0 < & \; \int_t^T \frac{2\sigma}{\sqrt{\delta^-}}
    e^{\frac{\gamma}{2\lambda}(T-u)}
    \sinh(\sqrt{\delta^-}(T-u)/\lambda) du = \frac{w^4_t}{w^2_t}.
  \end{aligned}
\end{equation}
Thus, we also obtain that $w^3_\cdot, w^4_\cdot > 0$ on $[0,T)$. But
this implies for the functions defined in~\eqref{def:kernels} that
$K^1(t,u) > 0$ and $K^2(t,u) > 0$ for all $0 \leq t \leq u < T$, as
well as that $\int_t^T K^1(t,u) du = \int_t^T K^2(t,u) du = 1$ for all
$t \in [0,T)$. \qed \medskip

The equilibrium share holdings prescribed by
the linear coupled ODE in~\eqref{eq:Xsol} can also be computed explicitly.

\begin{Corollary} \label{cor:Xsol}
  The solution $(\hat{X}^1, \hat{X}^2)$ to the linear ODE
  in~\eqref{eq:Xsol} is given by
    \begin{align}
      \hat{X}^{1}_t = & \; \frac{1}{2} (x^1 + x^2) e^{-\int_0^t
        \frac{c^+_s}{\lambda} ds} + \frac{1}{4\lambda} \int_0^t (c^+_s
      + c^-_s) (\hat{\xi}^1_s+\hat{\xi}^2_s) e^{-\int_s^t
        \frac{c^+_u}{\lambda} du} ds \nonumber \\
      & \; + \frac{1}{2} (x^1 - x^2) e^{-\int_0^t
        \frac{c^-_s}{\lambda} ds} + \frac{1}{4\lambda} \int_0^t
      (c^+_s + c^-_s) (\hat{\xi}^1_s-\hat{\xi}^2_s) e^{-\int_s^t
        \frac{c^-_u}{\lambda} du} ds \label{eq:explSolX1}  \\
      \intertext{and, similarly, by}
      \hat{X}^{2}_t = & \; \frac{1}{2} (x^2 + x^1) e^{-\int_0^t
        \frac{c^+_s}{\lambda} ds} + \frac{1}{4\lambda} \int_0^t (c^+_s
      + c^-_s) (\hat{\xi}^2_s+\hat{\xi}^1_s) e^{-\int_s^t
        \frac{c^+_u}{\lambda} du} ds \nonumber \\
      & \; + \frac{1}{2} (x^2 - x^1) e^{-\int_0^t
        \frac{c^-_s}{\lambda} ds} + \frac{1}{4\lambda} \int_0^t
      (c^+_s + c^-_s) (\hat{\xi}^2_s-\hat{\xi}^1_s) e^{-\int_s^t
        \frac{c^-_u}{\lambda} du} ds  \label{eq:explSolX2}
    \end{align}
  for all $t \in [0,T]$.
\end{Corollary}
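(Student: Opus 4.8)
The plan is to solve the linear coupled system~\eqref{eq:Xsol} by first decoupling it through the sum and difference of the two players' holdings, and then integrating the resulting pair of \emph{scalar} random linear ODEs by the variation-of-constants formula.

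First I would introduce $Y^+_t \set \hat X^1_t + \hat X^2_t$ and $Y^-_t \set \hat X^1_t - \hat X^2_t$, so that $Y^\pm_0 = x^1 \pm x^2$. Adding and subtracting the two equations in~\eqref{eq:Xsol} and invoking the elementary identities
\[
1 + w^5_t = \frac{2 c^+_t}{c^+_t + c^-_t}, \qquad 1 - w^5_t = \frac{2 c^-_t}{c^+_t + c^-_t} \qquad (0 \le t < T),
\]
which follow immediately from the definition of $w^5$ in~\eqref{def:weights}, one sees that the cross terms cancel --- precisely because both agents face the same impact parameters $\lambda$ and $\gamma$ --- and that $Y^+$ and $Y^-$ obey the decoupled scalar ODEs
\[
dY^\pm_t = \frac{c^+_t + c^-_t}{2\lambda}\bigl(\hat\xi^1_t \pm \hat\xi^2_t\bigr)\, dt - \frac{c^\pm_t}{\lambda}\, Y^\pm_t\, dt \qquad (0 \le t < T).
\]

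Next, for each fixed $t \in [0,T)$ the maps $s \mapsto c^\pm_s$ are continuous and finite on $[0,t]$ (they diverge only as $s \uparrow T$), so the integrating factors $s \mapsto \exp\bigl(\int_0^s c^\pm_u/\lambda \, du\bigr)$ are well defined there, while $s \mapsto \hat\xi^1_s \pm \hat\xi^2_s$ is, pathwise, integrable on $[0,t]$ by~\eqref{def:optsignal1}--\eqref{def:optsignal2} together with~\eqref{eq:assumption}. Multiplying each $Y^\pm$-equation by its integrating factor, integrating from $0$ to $t$ and rearranging gives
\[
Y^\pm_t = (x^1 \pm x^2)\, e^{-\int_0^t \frac{c^\pm_s}{\lambda}\, ds} + \frac{1}{2\lambda} \int_0^t (c^+_s + c^-_s)\bigl(\hat\xi^1_s \pm \hat\xi^2_s\bigr)\, e^{-\int_s^t \frac{c^\pm_u}{\lambda}\, du}\, ds .
\]
Recombining via $\hat X^1_t = \tfrac12(Y^+_t + Y^-_t)$ and $\hat X^2_t = \tfrac12(Y^+_t - Y^-_t)$ and regrouping the four resulting terms yields exactly~\eqref{eq:explSolX1} and~\eqref{eq:explSolX2}. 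Uniqueness is automatic since~\eqref{eq:Xsol} is linear with coefficients continuous on every compact subinterval of $[0,T)$.

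I do not anticipate a genuine obstacle: once one spots the decoupling, the corollary is a routine ODE computation. The only points deserving a moment's care are (i) verifying the two identities for $1 \pm w^5_t$ that make the cross terms disappear, and (ii) noting that although the coefficients $c^\pm_t$ blow up at the terminal time, this is immaterial on $[0,T)$, where the variation-of-constants argument is purely pathwise; the validity of the formulas at $t = T$ and the matching of the terminal state constraints are not part of this corollary but are supplied by~\eqref{eq:terminalconds} in Theorem~\ref{thm:main}.
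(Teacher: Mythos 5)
Your proposal is correct and follows essentially the same route as the paper: the paper's proof likewise decouples via the sum and difference $\hat X^1\pm\hat X^2$ (the ODEs~\eqref{eq:XodePlus} and~\eqref{eq:XodeMinus}, obtained precisely from the identities $1\pm w^5_t = 2c^\pm_t/(c^+_t+c^-_t)$ you state), applies the variation-of-constants formula, and recombines through $\hat X^{1,2}_t=\tfrac12(\hat X^1_t+\hat X^2_t)\pm\tfrac12(\hat X^1_t-\hat X^2_t)$. Your remarks on the blow-up of $c^\pm$ at $T$ being immaterial on $[0,T)$ and on deferring the terminal behaviour to~\eqref{eq:terminalconds} are consistent with how the paper organizes the argument.
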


\begin{proof}
Recall that from
the dynamics of $\hat{X}^1$ and $\hat{X}^2$ 
in~\eqref{eq:Xsol} we obtain that the processes
$\hat{X}^1 + \hat{X}^2$ and $\hat{X}^1 - \hat{X}^2$ satisfy,
respectively, the linear ODEs in~\eqref{eq:XodePlus}
and~\eqref{eq:XodeMinus} with initial values $x^1+x^2$ and
$x^1-x^2$. Applying the variation of constants formula then yields
\begin{equation*}
  \hat{X}^1_t \pm \hat{X}^2_t = (x^1 \pm x^2) e^{-\int_0^t
    \frac{c^{\pm}_s}{\lambda} ds} + \int_0^t
  \frac{c^+_s+c^-_s}{2\lambda} (\hat{\xi}^1_s \pm \hat{\xi}^2_s)
  e^{-\int_s^t \frac{c_u^{\pm}}{\lambda} du} ds
\end{equation*}
and hence the assertion in~\eqref{eq:explSolX1}
and~\eqref{eq:explSolX2} via the obvious relation
\begin{equation*}
  \hat{X}^{1,2}_t = \frac{1}{2} (\hat{X}^1_t + \hat{X}^2_t) \pm
  \frac{1}{2} (\hat{X}^1_t - \hat{X}^2_t).
\end{equation*}
\end{proof} 

Lastly, following simple properties of the weight functions introduced in~\eqref{def:weights} will help enlightening the structure of the Nash equilibrium presented in Theorem~\ref{thm:main}.

\begin{Lemma} \label{lem:weights}
  The weight functions $w^1, w^2, w^3, w^4,w ^5$ defined
  in~\eqref{def:weights} satisfy
  \begin{enumerate}
  \item $w_\cdot^5 \in (-1,1)$, $w_{\cdot}^{1,2,3,4} > 0$ on $[0,T)$
    and $w^1_\cdot + w^2_\cdot + w^3_\cdot + w^4_\cdot =1$ on $[0,T]$,
  \item $\lim_{t \uparrow T} w_t^{1,2} = 1/2$ and
    $\lim_{t \uparrow T} w_t^{3,4,5} = 0$.
  \end{enumerate}
\end{Lemma}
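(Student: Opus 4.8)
The plan is to derive all assertions from elementary properties of the functions $c^\pm$ together with two scalar inequalities for hyperbolic functions. First I would record that $c^+_t>0$ and $c^-_t>0$ on $[0,T)$: for $c^+$ this is immediate from $\coth x>0$ on $(0,\infty)$ and $\gamma>0$, while for $c^-$ one uses $\coth x>1$ on $(0,\infty)$ together with $\sqrt{\delta^-}=\sqrt{\gamma^2+2\lambda\sigma}>\gamma$ to get $\sqrt{\delta^-}\coth(\sqrt{\delta^-}(T-t)/\lambda)>\gamma$. Consequently $c^+_\cdot+c^-_\cdot>0$ on $[0,T)$, which makes every quotient in~\eqref{def:weights} well defined. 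Three of the claims in part~1 are then essentially algebraic: $w^1+w^2+w^3+w^4=\frac{c^++c^-}{c^++c^-}=1$ by the very definition of $w^3,w^4$ (the identity extending to $t=T$ by the continuity established in part~2); $w^5_t=\frac{c^+_t-c^-_t}{c^+_t+c^-_t}\in(-1,1)$ because $\lvert c^+_t-c^-_t\rvert<c^+_t+c^-_t$ is exactly the strict positivity of both $c^\pm_t$; and $w^1,w^2>0$ on $[0,T)$ since their numerators and denominators ($\sinh>0$ on $(0,\infty)$) are positive.

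The crux is the strict positivity of $w^3$ and $w^4$. Since $c^+_\cdot+c^-_\cdot>0$, the inequality $w^3_t>0$ is equivalent to $c^+_t>\sqrt{\delta^+}\,e^{\frac{\gamma}{3\lambda}(T-t)}/\bigl(3\sinh(\sqrt{\delta^+}(T-t)/(3\lambda))\bigr)$. Writing $\tau=T-t$, clearing the $\sinh$ and using $2\sinh=e^{\cdot}-e^{-\cdot}$, $2\cosh=e^{\cdot}+e^{-\cdot}$, this becomes $\tfrac{1+p}{2}e^{y}+\tfrac{1-p}{2}e^{-y}>e^{py}$ where $y=\sqrt{\delta^+}\,\tau/(3\lambda)>0$ and $p=\gamma/\sqrt{\delta^+}\in(0,1)$ (one checks $\gamma/(3\lambda)=p\cdot\sqrt{\delta^+}/(3\lambda)$), i.e.\ $f(y):=\cosh y+p\sinh y-e^{py}>0$ for $y>0$; and here $f(0)=f'(0)=0$ while $f''-f=(1-p^2)e^{py}>0$, so by variation of parameters $f(y)=\int_0^y\sinh(y-s)(1-p^2)e^{ps}\,ds>0$. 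The inequality $w^4_t>0$ reduces in exactly the same way (clear the $\sinh$, expand) to $\cosh y-q\sinh y>e^{-qy}$ for $y=\sqrt{\delta^-}\,\tau/\lambda>0$, $q=\gamma/\sqrt{\delta^-}\in(0,1)$, and the corresponding $h(y):=\cosh y-q\sinh y-e^{-qy}$ satisfies $h(0)=h'(0)=0$ and $h''-h=(1-q^2)e^{-qy}>0$, hence $h>0$ on $(0,\infty)$. I expect this pair of hyperbolic inequalities to be the only real work; everything else is bookkeeping.

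For part~2 I would use the expansion $\coth x=x^{-1}+x/3+O(x^3)$ as $x\downarrow0$. It yields $c^+_t=\lambda/(T-t)+O(T-t)$ and $c^-_t=\lambda/(T-t)+O(T-t)$ (in particular $\lim_{t\uparrow T}c^\pm_t=+\infty$, as asserted after~\eqref{def:cpcm}), whence $c^+_t+c^-_t=2\lambda/(T-t)+O(T-t)$, $\frac{c^\pm_t}{c^+_t+c^-_t}\to\tfrac12$, and $c^+_t-c^-_t\to\tfrac43\gamma$ so that $w^5_t\to0$. Substituting $\sinh x=x+O(x^3)$ and $e^{\pm c\tau}\to1$ into the definitions of $w^1$ and $w^2$ gives $w^1_t\to\sqrt{\delta^+}/(2\sqrt{\delta^+})=\tfrac12$ and $w^2_t\to\sqrt{\delta^-}/(2\sqrt{\delta^-})=\tfrac12$; and then $w^3_t=\frac{c^+_t}{c^+_t+c^-_t}-w^1_t\to0$ and $w^4_t=\frac{c^-_t}{c^+_t+c^-_t}-w^2_t\to0$. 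This also closes the one gap left above, namely that the identity $w^1+w^2+w^3+w^4=1$ extends continuously to $t=T$.
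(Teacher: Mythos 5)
Your proof is correct, but it takes a genuinely different route from the paper at the crux. The paper disposes of the positivity of $w^3$ and $w^4$ indirectly (Step~4 of the proof of Theorem~\ref{thm:main}): it computes the explicit antiderivatives in~\eqref{eq:kernelIntegral} and observes that the manifestly positive integrals $\int_t^T \frac{2\sigma}{\sqrt{\delta^\pm}}e^{\mp\cdot}\sinh(\cdot)\,du$ equal precisely $w^3_t/w^1_t$ and $w^4_t/w^2_t$ — a computation it needs anyway to show that the kernels $K^1,K^2$ integrate to one. You instead reduce $w^3_t>0$ and $w^4_t>0$ to the scalar inequalities $\cosh y + p\sinh y > e^{py}$ and $\cosh y - q\sinh y > e^{-qy}$ for $y>0$, $p=\gamma/\sqrt{\delta^+},\,q=\gamma/\sqrt{\delta^-}\in(0,1)$, proved via $f(0)=f'(0)=0$, $f''-f>0$ and Duhamel's formula; this is a clean self-contained argument (and could be made even shorter by noting that both inequalities are just strict convexity of $\exp$, since $py=\tfrac{1+p}{2}y+\tfrac{1-p}{2}(-y)$), but it does not yield the kernel normalization the paper gets for free. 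For part~2 the paper rewrites $w^1,w^2$ and the ratios $c^\pm/(c^++c^-)$ with common denominators and uses the $\sinh$-ratio limit, whereas you expand $\coth$ near zero; both are elementary and equally rigorous. One small slip in your write-up: the expansions should read $c^+_t=\lambda/(T-t)+\gamma/3+O(T-t)$ and $c^-_t=\lambda/(T-t)-\gamma+O(T-t)$, not $\lambda/(T-t)+O(T-t)$ — taken literally your statement would force $c^+_t-c^-_t\to 0$, contradicting your own (correct) limit $c^+_t-c^-_t\to\tfrac43\gamma$; none of your conclusions are affected once the constant terms are restored. Your treatment of $t=T$ (interpreting the $\infty/\infty$ expressions via the limits of part~2) is in fact slightly more careful than the paper's.
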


\begin{proof} 
\emph{1.} First,
recall from the proof of Theorem~\ref{thm:main}, Step~4, above that
$w^1_\cdot, w^2_\cdot,w^3_\cdot,w^4_\cdot > 0$ on $[0,T)$. Moreover,
from the definition in~\eqref{def:weights} we immediately obtain that
$w^1_t + w^2_t + w^3_t + w^4_t =1$ for all $t \in[0,T]$. Together with
the fact that $c^+_\cdot > 0$ and $c^-_\cdot > 0$ on $[0,T]$, we
also observe that $w^5_t \in (-1,1)$ for all $t \in [0,T]$.

\emph{2.} Concerning the limiting behaviour of the weight functions,
it suffices to note that
\begin{equation*}
  \lim_{t \uparrow T}
  \frac{\sinh(\sqrt{\delta^+}(T-t)/(3\lambda))}{\sinh(\sqrt{\delta^-}(T-t)/\lambda)}
  = \frac{\sqrt{\delta^+}}{3\sqrt{\delta^-}}.  
\end{equation*}
Then, rewriting $w^1$, $w^2$ in~\eqref{def:weights} by plugging in
$c^+$, $c^-$ from~\eqref{def:cpcm} to obtain the representations
\begin{equation*}
  w^1_t = \frac{\sqrt{\delta^+}
    e^{\frac{\gamma}{6\lambda}(T-t)}}{d^1_t}, \qquad  w^2_t =
  \frac{3 \sqrt{\delta^-}
    e^{-\frac{\gamma}{2\lambda}(T-t)}}{d^2_t}
\end{equation*}
with 
\begin{align*}
  d^1_t \set & \;
               \sqrt{\delta^+}\cosh(\sqrt{\delta^+}(T-t)/(3\lambda))-\gamma 
               \sinh(\sqrt{\delta^+}(T-t)/(3\lambda)) \\
             & + \sqrt{\delta^-}
               \sinh(\sqrt{\delta^+}(T-t)/(3\lambda))
               \coth(\sqrt{\delta^-}(T-t)/\lambda),  \\
  d^2_t \set & \; 3 \sqrt{\delta^-}
               \cosh(\sqrt{\delta^-}(T-t)/\lambda) -\gamma
               \sinh(\sqrt{\delta^-}(T-t)/\lambda) \\
             & + \sqrt{\delta^+}\sinh(\sqrt{\delta^-}(T-t)/\lambda)
               \coth(\sqrt{\delta^+}(T-t)/(3\lambda)) 
\end{align*}
yields
\begin{equation*}
  \lim_{t\uparrow T} w^1_t =
  \frac{\sqrt{\delta^+}}{\sqrt{\delta^+} + \sqrt{\delta^+}} =
  \frac{1}{2}, \qquad \lim_{t\uparrow T} w^2_t
  = \frac{\sqrt{\delta^-}}{\sqrt{\delta^-}
    + \sqrt{\delta^-}} = \frac{1}{2}.  
\end{equation*}
Similarly, with
\begin{align*}
  \frac{c_t^+}{c^+_t+c^-_t} =
  & \; \frac{2\sqrt{\delta^+}
    \coth(\sqrt{\delta^+}(T-t)/(3\lambda)) +
    \gamma}{2\sqrt{\delta^+}
    \coth(\sqrt{\delta^+}(T-t)/(3\lambda)) +
    6\sqrt{\delta^-} \coth(\sqrt{\delta^-}(T-t)/\lambda)-2\gamma} \\
  \frac{c_t^-}{c^+_t+c^-_t} =
  & \; \frac{6 \sqrt{\delta^-}
    \coth(\sqrt{\delta^-}(T-t)/\lambda) - 3
    \gamma}{2\sqrt{\delta^+}
    \coth(\sqrt{\delta^+}(T-t)/(3\lambda)) +
    6\sqrt{\delta^-} \coth(\sqrt{\delta^-}(T-t)/\lambda)-2\gamma}
\end{align*}
we also have
\begin{equation*}
  \lim_{t\uparrow T} \frac{c_t^+}{c^+_t+c^-_t} =
  \frac{\sqrt{\delta^+}}{\sqrt{\delta^+} + \sqrt{\delta^+}} =
  \frac{1}{2}, \qquad \lim_{t\uparrow T} \frac{c_t^-}{c^+_t+c^-_t} =
  \frac{\sqrt{\delta^-}}{\sqrt{\delta^-} + \sqrt{\delta^-}} = \frac{1}{2}
\end{equation*}
and hence
\begin{equation*}
 \lim_{t\uparrow T} w^3_t =  \lim_{t\uparrow T} w^4_t =
 \lim_{t\uparrow T} w^5_t = 0
\end{equation*}
as desired.
\end{proof}

The final lemma provides estimates with respect to the
$L^2(\PP\otimes dt)$-norm which are used in the proof of
Theorem~\ref{thm:main} above.

\begin{Lemma} \label{lem:bounds} Let
  $(\zeta_t)_{0 \leq t \leq T} \in L^2(\PP\otimes dt)$ be
  progressively measurable. Moreover, let $K^1(t,u)$, $K^2(t,u)$,
  $0 \leq t \leq u < T$, denote the kernels from
  Theorem~\ref{thm:main}.
  \begin{itemize}
  \item[a)] For
    $\zeta^{K^1}_t \set \EE[ \int_t^T \zeta_u K^1(t,u) du \vert
    \cF_t]$, $0 \leq t < T$, it holds that
    \begin{equation*}
      \Vert \zeta^{K^1} \Vert_{L^2(\PP\otimes dt)} \leq c
      \Vert \zeta \Vert_{L^2(\PP\otimes dt)}  
    \end{equation*}
    for some constant $c>0$.
  \item[b)] For
    $\zeta^{K^2}_t \set \EE[ \int_t^T \zeta_u K^2(t,u) du \vert
    \cF_t]$, $0 \leq t < T$, it holds that
    \begin{equation*}
      \Vert \zeta^{K^2} \Vert_{L^2(\PP\otimes dt)} \leq c
      \Vert \zeta \Vert_{L^2(\PP\otimes dt)}  
    \end{equation*}
    for some constant $c>0$.
  \end{itemize}
\end{Lemma}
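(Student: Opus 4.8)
The plan is to prove a) (part b) being entirely analogous, see below) by the classical combination of a conditional Cauchy--Schwarz estimate with Tonelli's theorem; the only ingredient that uses the specific structure of the kernel is a uniform-in-$u$ bound on $\int_0^u K^1(t,u)\,dt$. First I would rewrite the kernel in a normalized form: combining the definition~\eqref{def:kernels} with the identity~\eqref{eq:kernelIntegral} established in Step~4 of the proof of Theorem~\ref{thm:main}, one has, for $0\le t\le u<T$,
\[
  K^1(t,u)=\frac{h^+(u)}{\int_t^T h^+(v)\,dv},\qquad
  h^+(u):=\frac{2\sigma}{\sqrt{\delta^+}}\,e^{-\frac{\gamma}{3\lambda}(T-u)}\,
  \sinh\!\Big(\tfrac{\sqrt{\delta^+}}{3\lambda}(T-u)\Big)\ \ge\ 0,
\]
so that for each fixed $t\in[0,T)$ the map $u\mapsto K^1(t,u)$ is a probability density on $[t,T]$. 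Applying the conditional Cauchy--Schwarz (or Jensen) inequality with respect to this density and then the tower property for $\EE[\,\cdot\,\vert\,\cF_t]$,
\[
  \EE\big[(\zeta^{K^1}_t)^2\big]
  \le \EE\Big[\Big(\int_t^T|\zeta_u|\,K^1(t,u)\,du\Big)^2\Big]
  \le \EE\Big[\int_t^T|\zeta_u|^2\,K^1(t,u)\,du\Big].
\]
Integrating over $t\in[0,T]$ and invoking Tonelli's theorem (all integrands being nonnegative) to exchange $\EE$ with $\int_0^T dt$ and then the two Lebesgue integrals over the triangle $\{0\le t\le u\le T\}$, I obtain
\[
  \Vert \zeta^{K^1}\Vert_{L^2(\PP\otimes dt)}^2
  \le \EE\Big[\int_0^T|\zeta_u|^2\Big(\int_0^u K^1(t,u)\,dt\Big)du\Big]
  \le \Big(\sup_{0\le u<T}\int_0^u K^1(t,u)\,dt\Big)\,\Vert\zeta\Vert_{L^2(\PP\otimes dt)}^2 .
\]
Hence it remains only to check that $c^2:=\sup_{0\le u<T}\int_0^u K^1(t,u)\,dt$ is finite, and the assertion follows with this $c$.

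For that uniform bound I would argue that $u\mapsto h^+(u)/(T-u)$ is continuous and strictly positive on $[0,T)$ and satisfies $h^+(u)/(T-u)\to 2\sigma/(3\lambda)>0$ as $u\uparrow T$; therefore it extends continuously and strictly positively to the compact interval $[0,T]$, so there are constants $0<c_1\le c_2<\infty$ with $c_1(T-u)\le h^+(u)\le c_2(T-u)$ on $[0,T]$. Integrating the lower bound gives $\int_t^T h^+(v)\,dv\ge \tfrac{c_1}{2}(T-t)^2$, whence
\[
  \int_0^u K^1(t,u)\,dt
  = h^+(u)\int_0^u\frac{dt}{\int_t^T h^+(v)\,dv}
  \le c_2(T-u)\cdot\frac{2}{c_1}\int_0^u\frac{dt}{(T-t)^2}
  \le \frac{2c_2}{c_1}\,(T-u)\cdot\frac{1}{T-u}=\frac{2c_2}{c_1}
\]
for every $u\in[0,T)$, using $\int_0^u (T-t)^{-2}\,dt=(T-u)^{-1}-T^{-1}\le (T-u)^{-1}$. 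This proves $c^2<\infty$.

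Part b) is proved in exactly the same way, replacing $K^1$, $\delta^+$, $h^+$ by $K^2$, $\delta^-$, and $h^-(u):=\tfrac{2\sigma}{\sqrt{\delta^-}}e^{\frac{\gamma}{\lambda}(T-u)}\sinh\!\big(\tfrac{\sqrt{\delta^-}}{\lambda}(T-u)\big)$, using the second identity in~\eqref{eq:kernelIntegral} (namely $\int_t^T h^-(v)\,dv=w^4_t/w^2_t$, so that $K^2(t,u)=h^-(u)/\int_t^T h^-(v)\,dv$) together with $h^-(u)/(T-u)\to 2\sigma/\lambda>0$ as $u\uparrow T$. I do not anticipate a genuine obstacle here; the only points that need a little care are the (routine) justification of the interchanges of integrals via Tonelli and the verification that the continuous extensions of $h^\pm(u)/(T-u)$ to $[0,T]$ stay bounded away from $0$ and $\infty$ — which, via the blow-up $c_\pm^\mp$ type behaviour near $u=T$, is the one spot where the explicit formulas are actually used, but is immediate from them.
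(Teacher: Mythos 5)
Your argument is correct, and it fills in exactly what the paper leaves out: the paper does not spell out a proof but refers to Lemma~5.5 of \citet{BankSonerVoss:17}, whose argument is of the same type as yours (conditional Jensen with respect to the probability density $u\mapsto K^i(t,u)$, Tonelli, and a uniform bound on $\int_0^u K^i(t,u)\,dt$ coming from the comparability $h^\pm(u)\asymp T-u$ near maturity). In particular your normalization $K^1(t,u)=h^+(u)/\int_t^T h^+(v)\,dv$ via~\eqref{eq:kernelIntegral}, the two-sided bound $c_1(T-u)\le h^\pm(u)\le c_2(T-u)$, and the resulting estimate $\sup_u\int_0^u K^i(t,u)\,dt\le 2c_2/c_1$ are sound, so your proposal is a correct self-contained version of the omitted proof rather than a genuinely different route.
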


\begin{proof}
  Both upper bounds can be verified in a similar fashion as in the
  proof of Lemma 5.5 in~\citet{BankSonerVoss:17}. We will thus omit it
  here.
\end{proof}

\begin{Remark} \label{rem:solution} 
Following up on
  Remark~\ref{rem:problem}, setting $\xi^1 \equiv \xi^2 \equiv 0$
  and $\Xi^1_T = \Xi^2_T = 0$ $\PP$-almost surely, our
  Theorem~\ref{thm:main} together with Corollary~\ref{cor:Xsol}
  retrieves the two-player results from~\citet[Result
  1]{CarlinLoboViswanathan:07} for the case $\sigma = 0$ and
  from~\citet[Corollary 2.6]{SchiedZhang:17} for the case
  $\sigma > 0$. Note that this configuration yields
  $\hat{\xi}^1 \equiv \hat{\xi}^2 \equiv 0$ in~\eqref{def:optsignal1}
  and~\eqref{def:optsignal2}, which in turn implies that the Nash
  equilibrium trading rates in~\eqref{eq:Xsol} and the corresponding
  share holdings in~\eqref{eq:explSolX1} and~\eqref{eq:explSolX2} are
  deterministic.
\end{Remark}

We end this section by briefly discussing qualitatively the Nash equilibrium obtained in Theorem~\ref{thm:main}. Very similar to the single-player solution in~\cite{BankSonerVoss:17} it turns out that the trading rates $\hat{\alpha}^1$ and $\hat{\alpha}^2$ in~\eqref{eq:Xsol} prescribe, respectively, to gradually trade in the direction of an optimal signal process $\hat{\xi}^1_t$ and $\hat{\xi}^2_t$ (rather than toward the actual target position $\xi^1_t$, $\xi^2_t$), which is further adjusted by a fraction $w^5_t \in (-1,1)$ of the opponent's respective current portfolio position $\hat{X}^2_t$ and $\hat{X}^1_t$. The optimal signal processes $\hat{\xi}^1$ in~\eqref{def:optsignal1} and $\hat{\xi}^2$ in~\eqref{def:optsignal2} are convex combinations of weighted averages of expected future target positions of the processes $\xi^1$, $\xi^2$ and the expected terminal positions $\Xi^1_T$, $\Xi^2_T$, where the weights $w^1_t, w^2_t, w^3_t, w^4_t$ systematically shift toward the desired individual terminal state as $t \uparrow T$ (Lemma~\ref{lem:weights} implies that $\lim_{t \uparrow T} \hat{\xi}^i_t = \Xi^i_T$ $\PP$-a.s. for both players $i=1,2$). The increasing urgency rate $(c^+_t+c^-_t)/(2\lambda) \uparrow \infty$ for $t \uparrow T$, together with $\lim_{t\uparrow T} w^5_t = 0$, then forces both strategies in~\eqref{eq:Xsol} to end up in the predetermined terminal portfolio position at maturity~$T$ (see also the proof of Theorem~\ref{thm:main} above). Interestingly, we note that the first agent's optimal signal process $\hat{\xi}^1$ not only seeks to anticipate the future evolution of her own target strategy $\xi^1$ but, conscious of her competitor's trading goals, does so also for the opponent's target strategy~$\xi^2$. In other words, besides following her own objectives, she also takes into account the other agent's known trading intentions. Moreover, the weights $w^3_t$ and $w^4_t$ dictate the actual \emph{trading direction} with respect to the other agent's tracking target. Indeed, observe that if $w^3_t$ predominates $w^4_t$ in~\eqref{def:optsignal1}, the first player's optimal signal $\hat{\xi}^1$ directs to also trade in parallel in the \emph{same} direction as the second player, that is, in the direction of the expected future average positions of $\xi^2$. In contrast, if $w^4_t$ outweighs $w^3_t$, then the optimal signal imposes to trade in the \emph{opposite} direction of the second player's target strategy, i.e., toward the expected weighted averages of $-\xi^2$. The former case can be viewed as a \emph{predatory} trading action of the first agent against the second agent, whereas the latter case can be regarded as a \emph{cooperative} behaviour. The same applies for the second player
in~\eqref{def:optsignal2} due to symmetry. In our illustrations in Section~\ref{sec:illustrations} below it becomes apparent that both these cases depend on the relationship between the permanent and temporary price impact parameters $\gamma$ and $\lambda$. Loosely speaking, in a plastic market where $\gamma \gg \lambda$, the weight $w^3$ tends to be larger than $w^4$, and in an elastic market with $\lambda \gg \gamma$ we have that $w^4$ tends to be larger than $w^3$ (see also the graphical illustration of the weight functions in Figure~\ref{fig:weights} below). In this regard, depending on the illiquidity parameters the optimal signal processes~$\hat{\xi}^1$ and~$\hat{\xi}^2$ account for different types of regimes. It turns out that this leads to qualitative different behavioral patterns in the Nash equilibrium where both predation and cooperation between the agents can occur, even in a coexisting manner.

\section{Illustrations} \label{sec:illustrations}

In this section we present some case studies to illustrate the
qualitative behaviour of the two-player Nash equilibrium presented in
Theorem~\ref{thm:main}.

\subsection{Optimal liquidation revisited}

We start with revisiting the differential game of optimal portfolio
liquidation studied in~\citet{SchiedZhang:17}. Specifically, the first
agent seeks to liquidate her initial portfolio position of $x^1=1$
shares in the risky asset by time $T=2$ and hence requires her
terminal position to satisfy $\Xi^1_T=0$ $\PP$-a.s. at final
time. Vigilant about her share holdings and in line with her selling
intention she also wants her inventory to be close to 0 throughout by
tracking $\xi^1 \equiv 0$ on $[0,T]$. The second agent, on the
contrary, does not pursue any predetermined buying or selling
objectives but solely chooses to trade in the risky asset because he
knows about the intentions of the first liquidating agent. That is,
possessing no shares at time 0 ($x^2=0$) he gives himself the
constraints $\xi^2_t = \Xi^2_T = 0$ $\PP$-a.s. for all $t \in
[0,T]$. In this case, following Theorem~\ref{thm:main}, we have
$\hat{\xi}^1 \equiv \hat{\xi}^2 \equiv 0$ $\PP$-a.s. on $[0,T]$
in~\eqref{def:optsignal1} and~\eqref{def:optsignal2}, and the
deterministic equilibrium trading rates of both players
in~\eqref{eq:Xsol} reduce to
\begin{equation} \label{eq:ratesexample1}
  \hat{\alpha}^1_t = 
  \frac{c^+_t+c^-_t}{2\lambda} \left( - w^5_t
    \hat{X}^2_t - \hat{X}^1_t \right) \quad \text{and} \quad 
  \hat{\alpha}^2_t = 
  \frac{c^+_t+c^-_t}{2\lambda} \left( - w^5_t
    \hat{X}^1_t - \hat{X}^2_t \right) 
\end{equation}
on $[0,T)$; cf. also the result in~\cite[Corollary
2.6]{SchiedZhang:17} with a slightly different representation. We
observe in~\eqref{eq:ratesexample1} that the first agent's portfolio
position $\hat{X}^1_t$ is not gradually reverting towards 0 but takes
the effect of the second agent's actions into account via the
correction term $-w^5_t \hat{X}^2_t$. Similarly, concerning the second
agent, it is optimal for him to systematically trade in the direction
of the liquidating agent's current portfolio position~$\hat{X}^1_t$
weighted with $w^5_t \in (-1,1)$.

\begin{figure}
  \begin{center}
    \includegraphics[scale=.41]{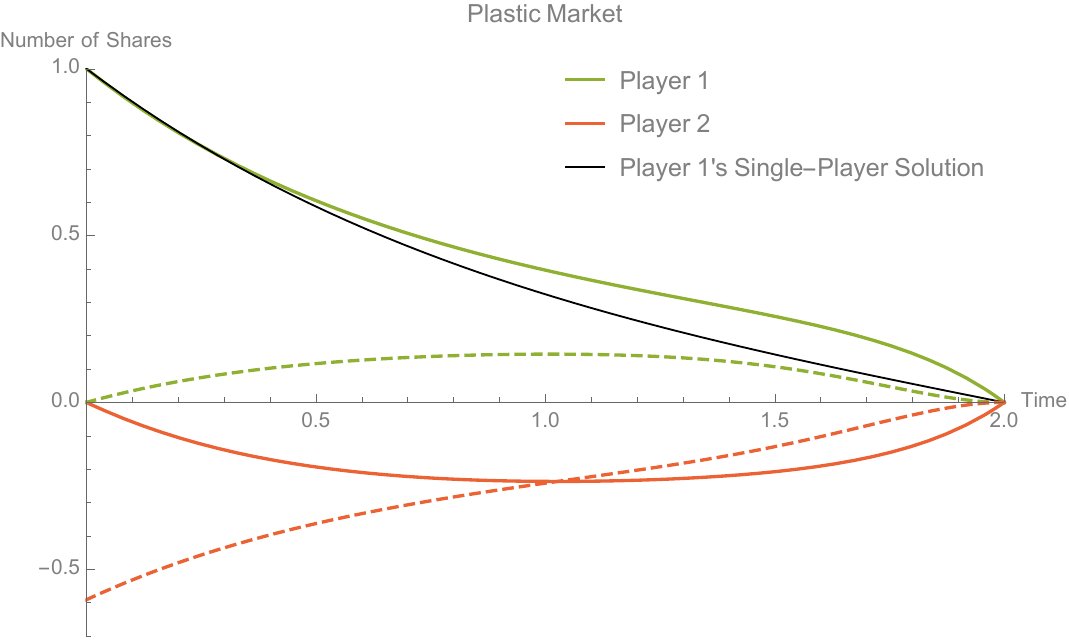}
    \includegraphics[scale=.41]{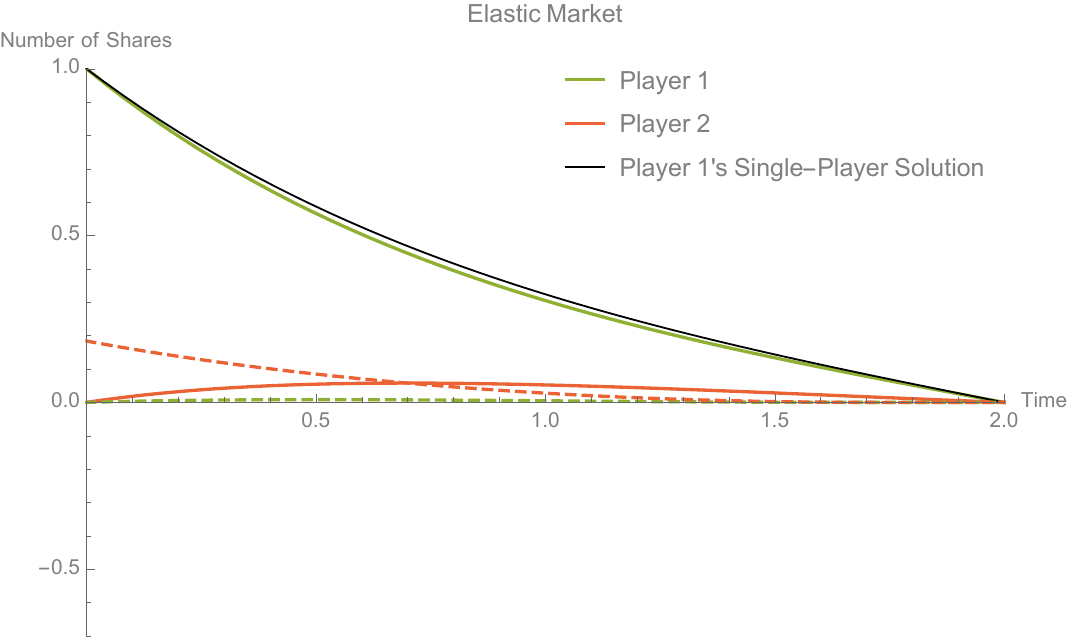}
    \caption{The two-player Nash equilibrium strategies $\hat{X}^1$
      for the liquidating agent 1 (green) and $\hat{X}^2$ for agent 2
      (orange) on $[0,T]$, together with the corresponding processes
      $-w^5 \hat{X}^i$ $(i=1,2)$ from the trading rates
      in~\eqref{eq:ratesexample1} (same-color dashed lines). The
      optimal single-player liquidation strategy
      from~\eqref{eq:singlePlayerSol} is depicted in black.  The
      parameters are $T=2$, $\sigma = 1$, $\lambda = 1$, as well as
      $\gamma = 4$ (left panel), $ = 0.2$ (right panel).}
    \label{fig:ex1fig1}
  \end{center}
\end{figure}

As shown in Figure~\ref{fig:ex1fig1}, this yields to predation on the
first agent in a plastic market where, e.g.,
$\gamma = 4 > 1 = \lambda$. Indeed, during the first half of the
trading period he short-sells the risky asset in parallel to the
selling of the first agent and then steadily unwinds his accrued short
position by buying back shares to become ``hands-clean'' by final
time~$T$. In contrast, in an elastic market with, e.g.,
$\gamma = 0.2 < 1 = \lambda$, the Nash equilibrium strategy dictates
the second agent to cooperate with the seller and to moderately buy
almost up to one-tenth of the shares by time $T/2$ agent~1 is
concurrently selling before starting liquidating his portfolio to
finish up with zero inventory at $T$. Note that the weight function
$w^5_\cdot$ in~\eqref{eq:ratesexample1} flips sign depending on the
market's illiquidity regime (see also Figure~\ref{fig:weights}). As a consequence,
compared to the single-player optimal liquidation strategy
$\hat{X}_t = 1 + \int_0^t \hat{\alpha}_s ds$, $t \in [0,T]$, which
satisfies
\begin{equation} \label{eq:singlePlayerSol}
  \hat{\alpha}_t =  -\sqrt{\frac{\sigma}{\lambda}}  \coth\left(
    \sqrt{\frac{\sigma}{\lambda}} (T-t) \right) \hat{X}_t \qquad (0
  \leq t < T)
\end{equation}
(cf., e.g.,~\citet{Almgren:12}), and does not depend on $\gamma$, we
observe in Figure~\ref{fig:ex1fig1} that, due to the presence of the
second agent's trading activity which directly feeds into the first
agent's turnover rate~$\hat{\alpha}^1$ via $-w^5 \hat{X}^2$
in~\eqref{eq:ratesexample1}, her optimal portfolio liquidation
strategy becomes more prudent in a plastic market and slightly more
aggressive in an elastic market environment. To sum up, in
equilibrium, depending on the illiquid market type, either predation
or cooperation between both agents occurs; see also the discussion
in~\cite[Section 3]{SchiedZhang:17}.

\subsection{Piecewise constant inventory targets}

The next two case studies are again simple deterministic examples but
this time with nonzero optimal signal processes $\hat{\xi}^1$ and
$\hat{\xi}^2$.

In the first example, as in the optimal liquidation problem above, we
suppose that agent~2 only trades in the risky asset because of his
awareness of the trading activity of the first agent. That is, with
$x^2 = 0$ initial shares his inventory targets are
$ \xi^2_t = \Xi^2_T = 0$ $\PP$-a.s. for all $t \in [0,T]$. Concerning
the first agent, starting with no inventory $x^1=0$ she wants to
follow a stock-buying schedule over a time period of $T=10$ that
prescribes to hold one share until time $T/2$ and then to double and
hold her position up to time~$T$. Her inventory target is thus
$\xi^1_t = 1 \cdot 1_{\{0 \leq t < 5\}}+2 \cdot 1_{\{5 \leq t \leq
  10\}}$ on $[0,T]$ with terminal constraint $\Xi^1_T = 2$. Note that
in this game setup the optimal signal processes $\hat{\xi}^1$ and
$\hat{\xi}^2$ of both agents in~\eqref{def:optsignal1}
and~\eqref{def:optsignal2} in equilibrium are nonzero. In particular,
similar to the single-player case in~\cite{BankSonerVoss:17} they are
anticipating and smoothing out the jump in $\xi^1$ at time $T/2$ via
the averaging through the kernels $K^1$ and~$K^2$.
\begin{figure}[!ht]
  \begin{center}
    \includegraphics[scale=.4]{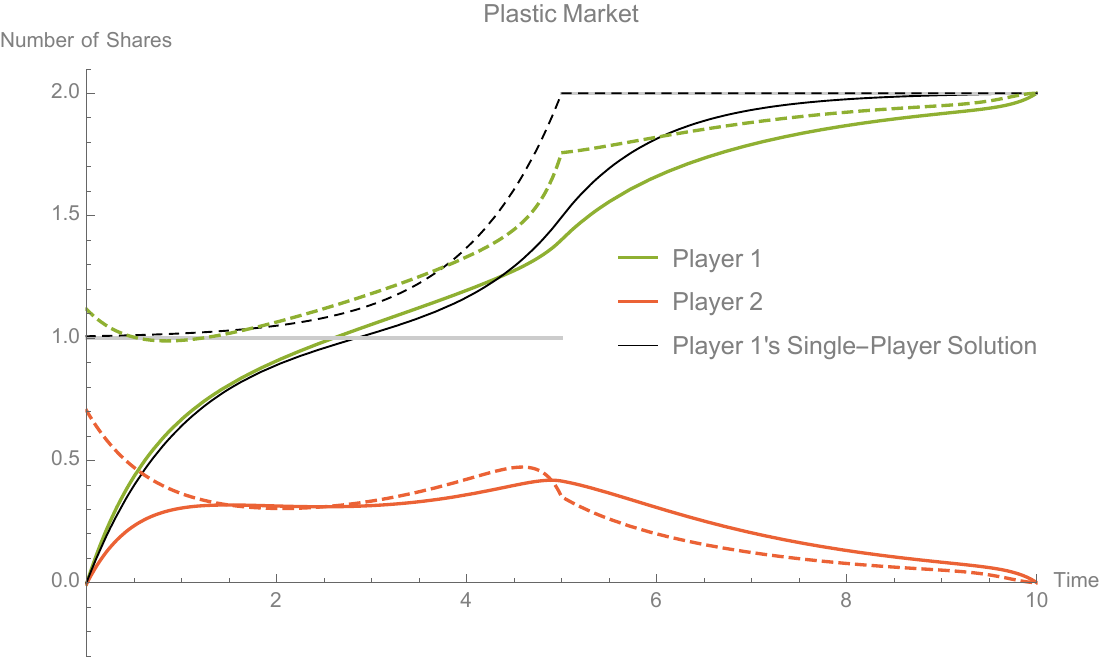}
    \includegraphics[scale=.4]{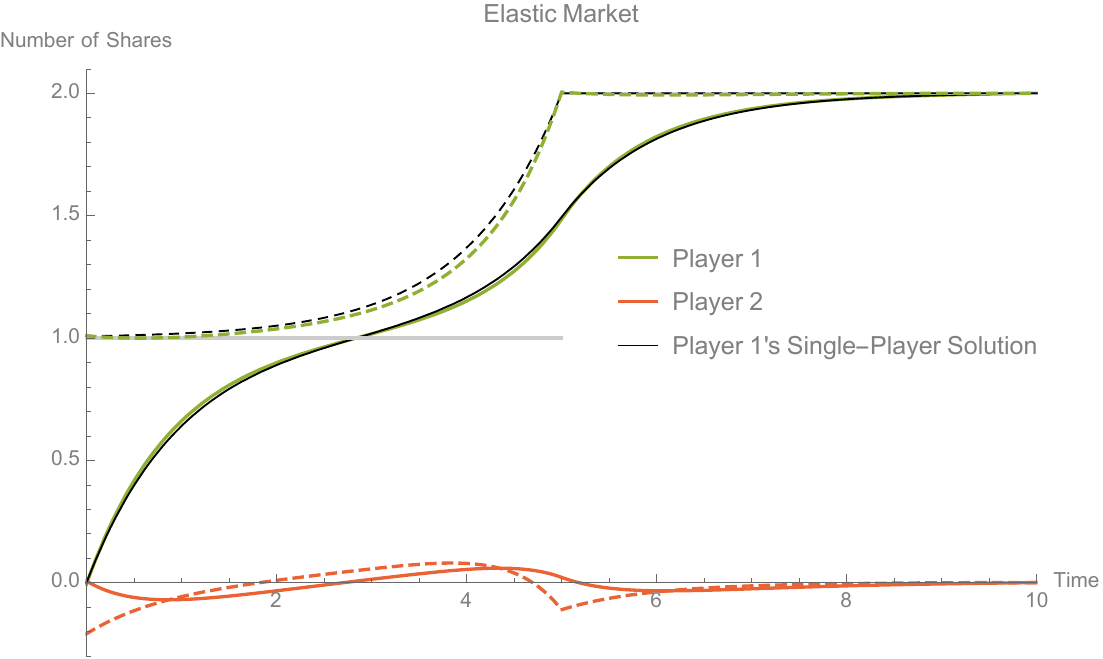}
    \caption{The two-player Nash equilibrium strategies $\hat{X}^1$
      for Player 1 (green) and $\hat{X}^2$ for Player 2 (orange),
      together with the processes $\hat{\xi}^i-w^5 \hat{X}^j$
      $(i\neq j \in \{1,2\})$ from the optimal trading rates
      in~\eqref{eq:Xsol} (same-color dashed lines). The first agent's
      buying program $\xi^1 = 1_{[0,5)}+2 \cdot 1_{[5,10]}$ is
      plotted in grey. For comparison, the corresponding single-player
      optimal tracking strategy with associated optimal signal process
      from~\cite{BankSonerVoss:17} is depicted in black (solid and
      dashed).  The parameters are $T=10$, $\sigma = 1$,
      $\lambda = 1$, as well as $\gamma = 4$ (left panel),
      $\gamma = 0.2$ (right panel).}
    \label{fig:ex2fig1}
  \end{center}
\end{figure}
The associated Nash-equilibrium trading strategies $\hat{X}^1$ and
$\hat{X}^2$ from Theorem~\ref{thm:main} are presented in
Figure~\ref{fig:ex2fig1}. As expected from the liquidation problem
above, if the market is plastic $(\gamma > \lambda)$ the second agent
heavily preys on the first agent by trading halfway of the trading
period in the same direction and buying shares. Accordingly, in
comparison to the first agent's single-player optimal tracking
strategy from~\cite{BankSonerVoss:17} (which does not dependent on
$\gamma$) her running after the buying-schedule $\xi^1$ gets affected
due to the presence of the preying second agent and falls
behind the single-player solution in the second half of the trading period (also recall the
adjustment $\hat{\xi}^1 -w^5\hat{X}^2$ of the first agent's optimal
signal process in her trading rate in~\eqref{eq:Xsol}).  However, if
the market is elastic $(\lambda > \gamma)$ the second agent's optimal
behaviour in equilibrium changes. Interestingly, we observe that his
strategy turns out to be a succession of round-trips during which he
either provides liquidity to his opponent by short-selling the risky
asset like, e.g., during the first quarter of the trading period, or
engages in predatory trading by concurrently building up some
inventory in parallel to his adversary's buying efforts as it is the
case during the second quarter of the trading period. Thus, compared
to the first agent's single-player optimal strategy, she suitably buys
slightly faster and slower in the two-player setup. Overall, it turns
out that predation and cooperation coexist in equilibrium in this
case.

As a second example, let us examine the situation where both agents
with zero initial inventory $x^1=x^2=0$ seek to gradually build up and
hold a positive fraction of the risky asset over some time period
$[0,T]$ with~$T=10$. Concretely, assume that
$\xi^1 \equiv \Xi^1_T = 1$ and $\xi^2 \equiv \Xi^2_T = 0.1$, i.e.,
agent 1 wants her inventory to be close to 1 and ten times larger than
the desired inventory level of agent~2 all through the trading period
$[0,T]$.
\begin{figure}[!ht]
  \begin{center}
    \includegraphics[scale=.41]{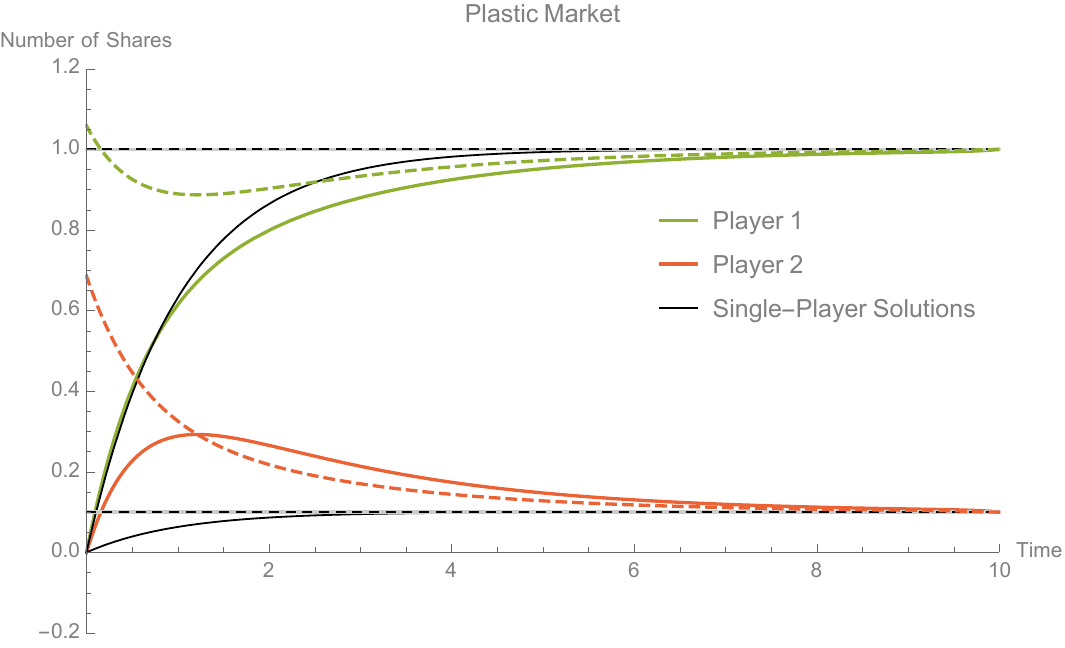}
    \includegraphics[scale=.41]{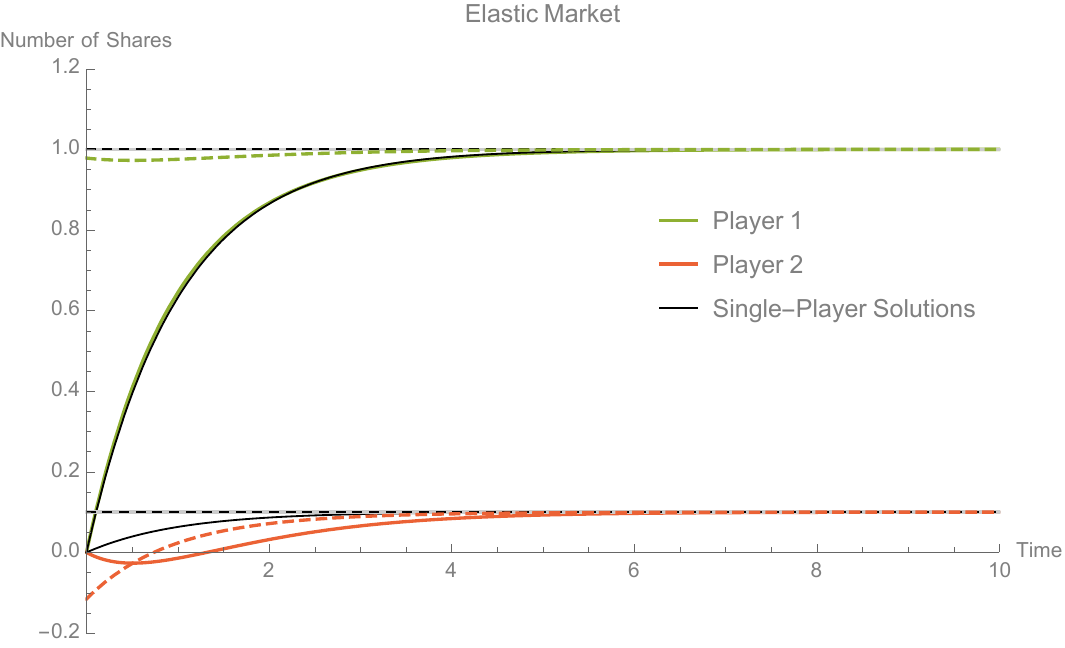}
    \caption{The two-player Nash equilibrium strategies $\hat{X}^1$
      for Player 1 (green) and $\hat{X}^2$ for Player 2 (orange),
      together with the processes $\hat{\xi}^i-w^5 \hat{X}^j$
      $(i\neq j \in \{1,2\})$ from the optimal trading rates
      in~\eqref{eq:Xsol} (same-color dashed lines). Both agent's
      inventory targets $\xi^1 \equiv 1$ and $\xi^2 \equiv 0.1$ are
      plotted in grey. For comparison, the corresponding single-player
      optimal tracking strategies with associated optimal signal
      processes from~\cite{BankSonerVoss:17} are depicted in black
      (solid and dashed). The parameters are $T=10$, $\sigma = 1$,
      $\lambda = 1$, as well as $\gamma = 4$ (left panel),
      $\gamma = 0.2$ (right panel).}
    \label{fig:ex2fig2}
  \end{center}
\end{figure}
The associated Nash equilibrium strategies $\hat{X}^1$ and $\hat{X}^2$
from Theorem~\ref{thm:main} are presented in Figure~\ref{fig:ex2fig2}.
Again, as expected from the analysis above, in a plastic market it is
optimal for agent 2 to excessively prey on the first agent who aims
for a much larger asset position by buying up to three times more
shares than his actual target inventory predetermines. In response,
the acquisition of the first agent is slowed down compared to her
single-player optimal strategy from~\cite{BankSonerVoss:17}. By
contrast, in an elastic market environment it turns out to be optimal
for the second agent to initially ignore her own tracking target and
to trade \emph{away} from her desired inventory level in order to
provide liquidity to the higher-volume seeking first agent by
short-selling some shares. Also note how in this case the second
agent's single-player optimal tracking strategy
from~\cite{BankSonerVoss:17} strongly differs from her optimal
behaviour in the two-player Nash equilibrium at the beginning of the
trading period.

\subsection{Running after the delta}

In the final two examples we want to investigate a situation where the
target strategies $\xi^1$ and $\xi^2$ are adapted stochastic
processes. Specifically, let us suppose that the first agent wants to
hedge an at-the-money call option with maturity $T$ on the underlying
unaffected price process $P=P_0+\sqrt{\sigma} W$ in~\eqref{def:exPrice} by tracking the
corresponding frictionless (Bachelier-)delta-hedging strategy
\begin{equation} \label{def:delta}
  \xi^1_t \set \Phi \left( \frac{P_t - P_0}{\sqrt{\sigma(T-t)}} \right)
  \quad (0 \leq t \leq T).
\end{equation}
Here, $\Phi$ denotes the cumulative distribution function of the
standard normal distribution. We further suppose that her initial
position in the risky asset coincides with the frictionless delta
$x^1=\xi^1_0 = 1/2$ and that $\Xi^1_T = 0$ $\PP$-a.s., i.e., she wants
to systematically unwind her hedging portfolio when approaching
maturity~$T$.

\begin{Lemma}
  The process $(\xi^1_t)_{0 \le t \leq T}$ in~\eqref{def:delta} is a martingale on~$[0,T]$.  
\end{Lemma}

\begin{proof}
  Obviously, $(\xi^1_t)_{0 \le t \leq T}$ is adapted, bounded and hence integrable. Moreover, using the property that for any $a,b \in \mathbb{R}$ a standard normal distributed random variable~$Z$ satisfies $\mathbb{E}[\Phi(a Z + b)] = \Phi(b/\sqrt{1+a^2})$ we obtain
  \begin{equation*}
\mathbb{E}\left[ \Phi \left( \frac{P_t - P_0}{\sqrt{\sigma(T-t)}} \right) \bigg\vert \, \mathcal{F}_s \right] = \mathbb{E}\left[ \Phi \left( \frac{\sqrt{\sigma (t-s)} Z + P_s - P_0}{\sqrt{\sigma(T-t)}} \right)\right] = \Phi \left( \frac{P_s - P_0}{\sqrt{\sigma(T-s)}} \right)
  \end{equation*}
  as desired.
\end{proof}

Firstly, we assume that the second agent does not pursue any specific
predetermined trading objectives, that is, $x^2 = \xi^2 = \Xi^2_T = 0$
$\PP$-a.s. Since~$\xi^1$ in~\eqref{def:delta} is a martingale
on~$[0,T]$ the optimal signal processes $\hat{\xi}^1$ and
$\hat{\xi}^2$ in~\eqref{def:optsignal1} and~\eqref{def:optsignal2}
simplify to
\begin{equation} \label{ex3:signals}
  \hat{\xi}^1_t = (w^3_t + w^4_t) \xi^1_t \quad \text{and} \quad
  \hat{\xi}^2_t = (w^3_t - w^4_t) \xi^1_t \qquad (0 \leq t \leq T),
\end{equation}
using Fubini's theorem and the fact that for each $t \in [0,T)$ the kernels $K^1(t,u)$ and $K^2(t,u)$ as functions in $u \in [t,T)$ integrate to one over $[t,T]$. The Nash equilibrium strategies $\hat{X}^1$ and $\hat{X}^2$ from
Theorem~\ref{thm:main} are plotted in Figure~\ref{fig:ex3fig1},
together with the corresponding realisation of the delta-hedge~$\xi^1$
in the case where the call option expires in the money.
\begin{figure}[!ht] 
  \begin{center}
    \includegraphics[scale=.41]{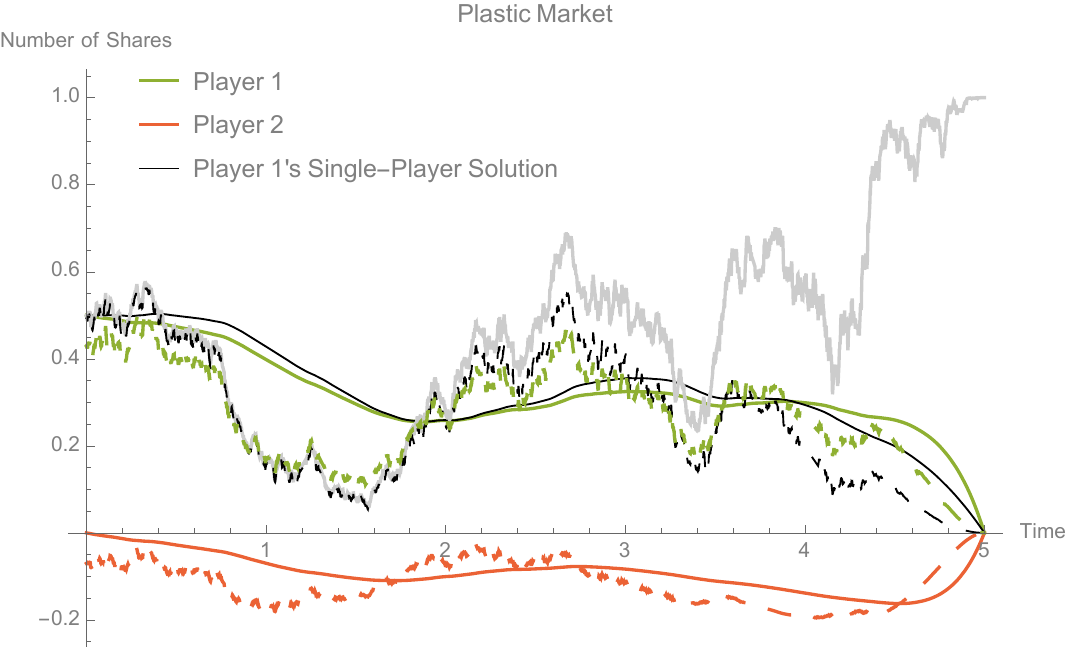}
    \includegraphics[scale=.41]{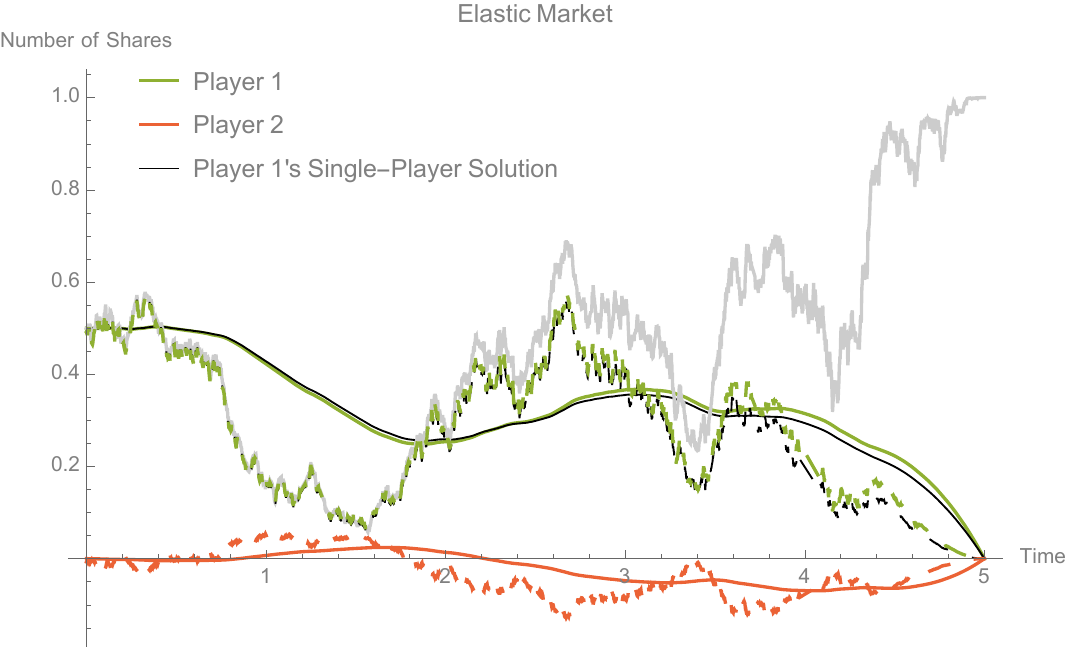}
    \caption{The two-player Nash equilibrium strategies $\hat{X}^1$
      for Player 1 (green) and $\hat{X}^2$ for Player 2 (orange),
      together with the processes $\hat{\xi}^i-w^5 \hat{X}^j$
      $(i\neq j \in \{1,2\})$ from the optimal trading rates
      in~\eqref{eq:Xsol} (same-color dashed lines). The first agent's
      frictionless delta-hedge $\xi^1$ is plotted in grey. For
      comparison, her corresponding single-player optimal hedging
      strategy with associated optimal signal process
      from~\cite{BankSonerVoss:17} is depicted in black (solid and
      dashed). The parameters are $T=5$, $\sigma = 1$, $\lambda = 1$,
      as well as $\gamma = 4$ (upper panel), $\gamma = 0.2$ (lower
      panel).}
    \label{fig:ex3fig1}
  \end{center}
\end{figure}
Depending on the illiquidity parameters, we observe the same
behavioral patterns in equilibrium as in the deterministic cases
analyzed above: In a plastic market environment, the second agent
engages in predatory trading on the first agent by trading in parallel
in the same direction of the delta-hedge. When the market is elastic
he turns into a liquidity provider instead and partially takes the
opposite side of the hedger's transactions. Also note that the sign of
the second agent's optimal signal process in~\eqref{ex3:signals} is
determined by the relation between the weights $w^3$ and $w^4$, which
is in turn affected by the relation between~$\gamma$ and~$\lambda$
(cf. also Figure~\ref{fig:weights}).

\begin{figure}[!ht]
  \begin{center}
    \includegraphics[scale=.39]{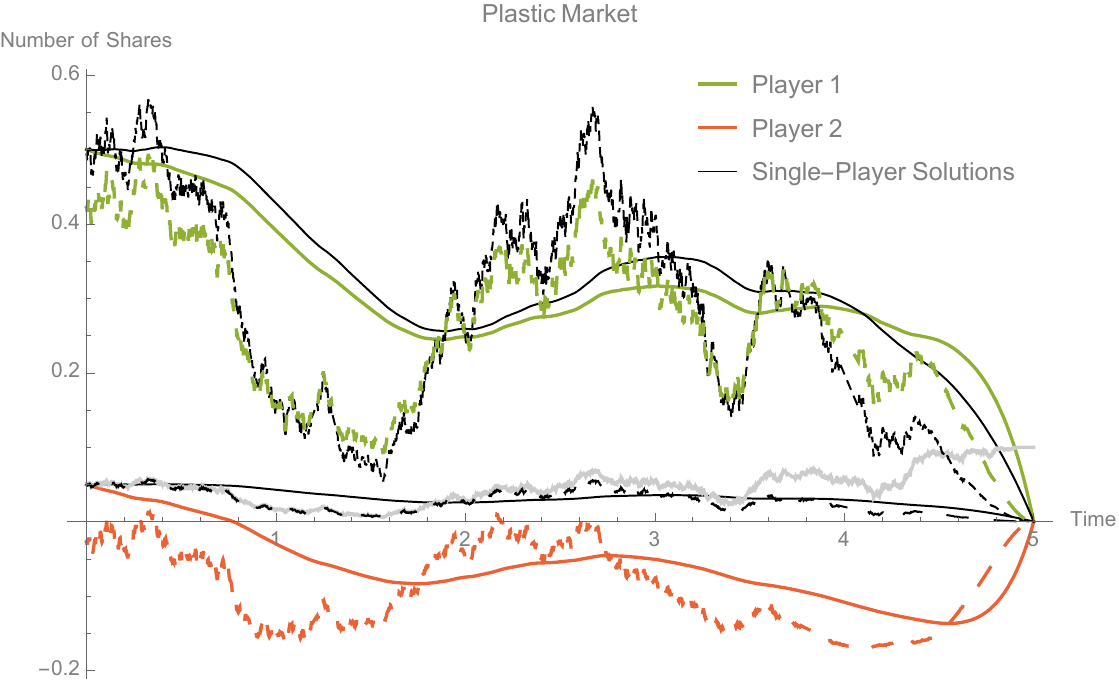}
    \includegraphics[scale=.39]{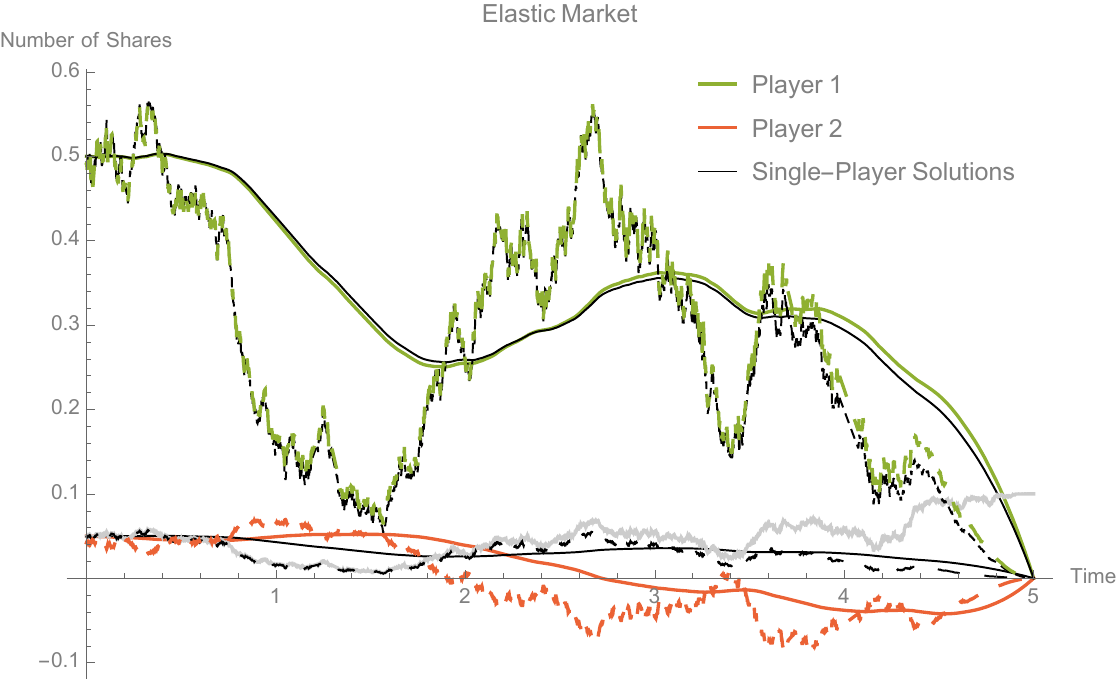}
    \caption{The two-player Nash equilibrium strategies $\hat{X}^1$
      for Player 1 (green) and $\hat{X}^2$ for Player 2 (orange),
      together with the processes $\hat{\xi}^i-w^5 \hat{X}^j$
      $(i\neq j \in \{1,2\})$ from the optimal trading rates
      in~\eqref{eq:Xsol} (same-color dashed lines). Only the second
      agent's frictionless delta-hedge $\xi^2=\xi^1/10$ is plotted in
      grey (the first agent's target strategy $\xi^1$ is the same as
      in Figure~\ref{fig:ex3fig1} and omitted here). For comparison,
      the corresponding single-player optimal hedging strategies of
      the two agents together with their associated optimal signal
      processes from~\cite{BankSonerVoss:17} are depicted in black
      (solid and dashed). The parameters are $T=5$, $\sigma = 1$,
      $\lambda = 1$, as well as $\gamma = 4$ (upper panel),
      $\gamma = 0.2$ (lower panel).}
    \label{fig:ex3fig2}
  \end{center}
\end{figure}

Secondly, let us now assume that the second agent also hedges a
one-tenth fraction of the same call option, i.e., $\xi^2 = \xi^1/10$
(with initial and final portfolio positions $x^2=1/20$ and
$\Xi^2_T =0$ $\PP$-a.s.). The resulting Nash equilibrium strategies
from Theorem~\ref{thm:main} are presented in Figure~\ref{fig:ex3fig2}
where we used the same realisation of the delta-hedge as in
Figure~\ref{fig:ex3fig1}. In a similar vein as in the deterministic
case above, the second agent's optimal behaviour in the two-player
Nash equilibrium changes notably compared to his optimal single-player
frictional hedging strategy from~\cite{BankSonerVoss:17}; focussing
more on preying on the first agent's larger hedging portfolio in a
plastic market, or on providing liquidity to the latter in an elastic
market.

\begin{figure}[!ht]
  \begin{center}
    \includegraphics[scale=.39]{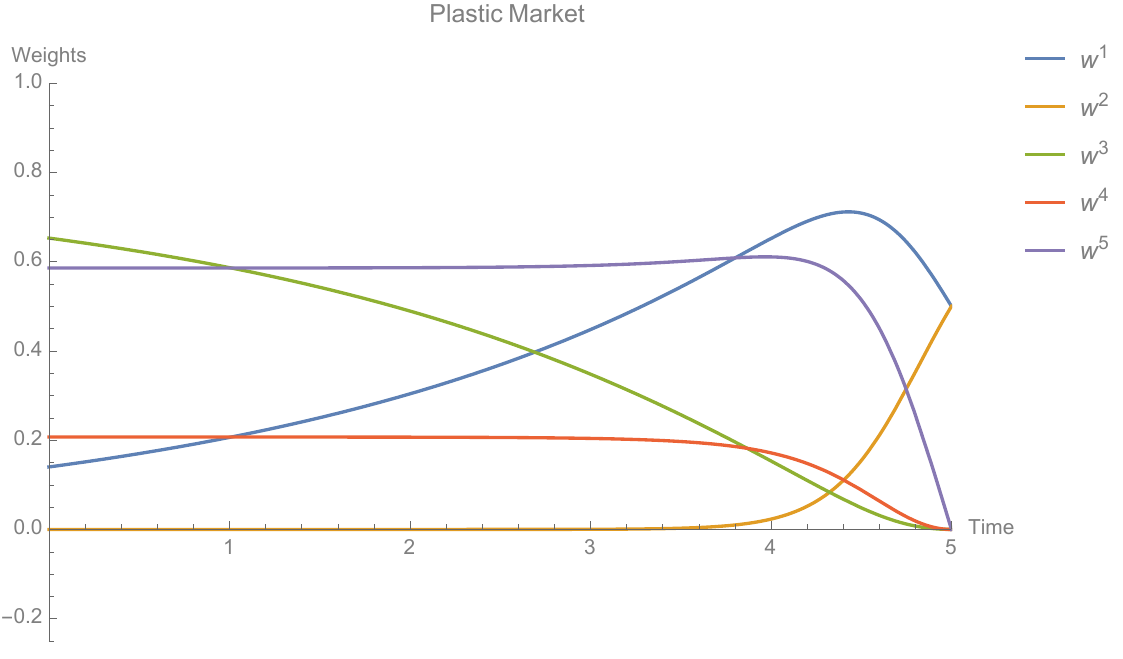}
    \includegraphics[scale=.39]{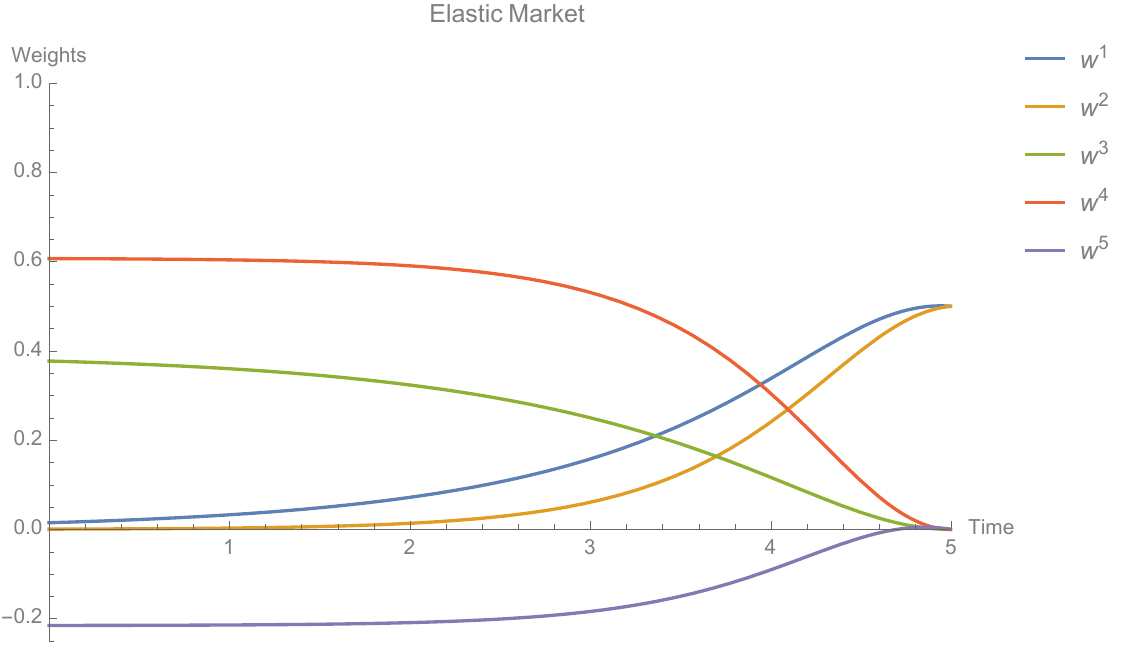}
    \caption{Exemplary illustration of the weight functions $w^1$,
      $w^2$, $w^3$, $w^4$, $w^5$ on $[0,T]$ defined
      in~\eqref{def:weights}. The parameters are $T=5$, $\sigma = 1$,
      $\lambda = 1$, as well as $\gamma = 4$ (upper panel),
      $\gamma = 0.2$ (lower panel).}
    \label{fig:weights}
  \end{center}
\end{figure}

\section*{Appendix} \label{sec:appendicx}
Since the proof of Theorem~\ref{thm:main} is a verification of a proposed Nash equilibrium, we briefly explain for the reader's convenience how the candidate Nash equilibrium strategies $(\hat{\alpha}^1,\hat{\alpha}^2)$ provided in~\eqref{eq:Xsol} can be constructed. Suppose we replace the constrained optimization problems in~\eqref{eq:optProbAgent1} and~\eqref{eq:optProbAgent2} by their unconstrained versions  
\begin{align} 
     J^{1,n}(\alpha^1;\alpha^2) \triangleq & \, J^1(\alpha^1;\alpha^2) + \frac{n}{2} \, \mathbb{E}[(X^1_T - \Xi^1_T)^2] \rightarrow \min_{\alpha^1 \in \cA},
     \label{def:auxprob1} \\
     J^{2,n}(\alpha^2;\alpha^1) \triangleq & \, J^2(\alpha^2;\alpha^1) + \frac{n}{2} \, \mathbb{E}[(X^2_T - \Xi^2_T)^2] \rightarrow \min_{\alpha^2 \in \cA} \label{def:auxprob2}
\end{align}
with some penalty parameter $n \in \mathbb{N}$. Then, along the same lines of Lemmas~\ref{lem:convex},~\ref{lem:uniqueNE}, \ref{lem:gateaux}, \ref{lem:FOC} above, solving~\eqref{def:auxprob1} and~\eqref{def:auxprob2} simultaneously results into solving following coupled FBSDE system
  \begin{equation} \label{eq:auxFBSDE}
    \left\{
    \begin{aligned}
      dX^1_t = & \; \alpha^1_t dt, \qquad X^1_0 = x^1, \\
      dX^2_t = & \; \alpha^2_t dt, \qquad X^2_0 = x^2, \\
      d\alpha^1_t = & \, \frac{\sigma}{\lambda} (X^1_t - \xi^1_t) dt -
      \frac{\gamma}{2\lambda} \alpha^2_t dt - \frac{1}{2} d\alpha^2_t +
      d\tilde{M}^1_t, \\
      \alpha^1_T = & \, -\frac{n}{\lambda} (X^1_T - \Xi^1_T) - \frac{1}{2} \alpha^2_T -\frac{\gamma}{2\lambda} (X^2_T - x^2), \\
      d\alpha^2_t = & \; \frac{\sigma}{\lambda} (X^2_t - \xi^2_t) dt -
      \frac{\gamma}{2\lambda} \alpha^1_t dt - \frac{1}{2} d\alpha^1_t +
      d\tilde{M}^2_t, \\
      \alpha^2_T = & \, -\frac{n}{\lambda} (X^2_T - \Xi^2_T) - \frac{1}{2} \alpha^1_T -\frac{\gamma}{2\lambda} (X^1_T - x^1)
    \end{aligned}
    \right.
  \end{equation}
for two suitable square integrable martingales $(\tilde{M}^1_t)_{0 \leq t \leq T}$ and $(\tilde{M}^2_t)_{0 \leq t \leq T}$. The system in~\eqref{eq:auxFBSDE} can be decoupled by adding and subtracting both forward and backward equations to obtain the two autonomous systems
\begin{equation} \label{eq:auxFBSDEplus}
    \left\{
    \begin{aligned}
      d(X^1_t+X^2_t) = & \; (\alpha^1_t + \alpha^2_t) dt, \qquad X^1_0 + X^2_0 = x^1 + x^2, \\
      d(\alpha^1_t + \alpha^2_t) = & \, \frac{2\sigma}{3\lambda} \left( (X^1_t+X^2_t) - (\xi^1_t+\xi^2_t) \right) dt -
      \frac{\gamma}{3\lambda} (\alpha^1_t+\alpha^2_t) dt + \frac{2}{3} d(\tilde{M}^1_t + \tilde{M}^2_t), \\
      \alpha^1_T + \alpha^2_T = & \, -\frac{2n}{3\lambda} \left( (X^1_T + X^2_T) - (\Xi^1_T+\Xi^2_T) \right) -\frac{\gamma}{3\lambda} \left( (X^1_T + X^2_T) - (x^1 + x^2) \right),
    \end{aligned}
    \right.
  \end{equation}
and
\begin{equation} \label{eq:auxFBSDEminus}
    \left\{
    \begin{aligned}
      d(X^1_t-X^2_t) = & \; (\alpha^1_t - \alpha^2_t) dt, \qquad X^1_0 - X^2_0 = x^1 - x^2, \\
      d(\alpha^1_t - \alpha^2_t) = & \, \frac{2\sigma}{\lambda} \left( (X^1_t-X^2_t) - (\xi^1_t-\xi^2_t) \right) dt +
      \frac{\gamma}{\lambda} (\alpha^1_t-\alpha^2_t) dt + 2 d(\tilde{M}^1_t - \tilde{M}^2_t), \\
      \alpha^1_T - \alpha^2_T = & \, -\frac{2n}{\lambda} \left( (X^1_T - X^2_T) - (\Xi^1_T+\Xi^2_T) \right) + \frac{\gamma}{\lambda} \left( (X^1_T - X^2_T) - (x^1 - x^2) \right).
    \end{aligned}
    \right.
  \end{equation}
The decoupled FBSDEs in~\eqref{eq:auxFBSDEplus} and~\eqref{eq:auxFBSDEminus} are linear. To solve them, we make a linear ansatz of the following form
\begin{equation} \label{eq:ansatz}
    \lambda (\alpha^1_t + \alpha^2_t) = b^{+,n}_t - c^{+,n}_t (X^1_t+X^2_t),\quad  \lambda (\alpha^1_t - \alpha^2_t) = b^{-,n}_t - c^{-,n}_t (X^1_t-X^2_t). 
\end{equation}
Plugging this ansatz in~\eqref{eq:auxFBSDEplus} and~\eqref{eq:auxFBSDEminus}, respectively, and comparing coefficients yields two deterministic Riccati equations for $c^{+,n}$ and $c^{-,n}$ given by
\begin{equation} \label{eq:riccatiaux} 
\begin{aligned}
(c^{+,n}_t)' = & \, \frac{(c^{+,n}_t)^2}{\lambda} - \frac{\gamma}{3\lambda} c^{+,n}_t - \frac{2}{3} \sigma, \quad c^{+,n}_T = \frac{1}{3} (2 n + \gamma), \\
(c^{-,n}_t)' = & \, \frac{(c^{-,n}_t)^2}{\lambda} +
  \frac{\gamma}{\lambda} c^{-,n}_t - 2\sigma, \quad c^{-,n}_T = (2 n - \gamma);
\end{aligned}
\end{equation}
as well as two linear BSDEs for $b^{+,n}$ and $b^{-,n}$ given by
\begin{equation} \label{eq:BSDEaux} 
\begin{aligned}
db^{+,n}_t = & \, \left( \left( \frac{c^{+,n}_t}{\lambda} - \frac{\gamma}{3\lambda} \right) b^{+,n}_t - \frac{2\sigma}{3} (\xi^1_t + \xi^2_t) \right) dt - \frac{2\lambda}{3} d(\tilde{M}^1_t + \tilde{M}^2_t), \\
b^{+,n}_T = & \, \frac{2n}{3} (\Xi^1_T + \Xi^2_T) + \frac{\gamma}{3} (x^1+x^2), \\
db^{-,n}_t = & \, \left( \left( \frac{c^{-,n}_t}{\lambda} + \frac{\gamma}{\lambda} \right) b^{-,n}_t - 2\sigma (\xi^1_t - \xi^2_t) \right) dt - 2\lambda d(\tilde{M}^1_t - \tilde{M}^2_t), \\
b^{-,n}_T = & \, 2n (\Xi^1_T - \Xi^2_T) - \gamma (x^1-x^2).
\end{aligned}
\end{equation}
The ODEs in~\eqref{eq:riccatiaux} can be solved in closed form with solutions
\begin{equation} \label{eq:ODEauxSol} 
c^{+,n}_t = \frac{1}{6} \gamma + \frac{1}{3} \sqrt{\delta^+} \frac{e^{\frac{2\sqrt{\delta^+}}{3\lambda} (T-t)} \kappa^+_n - 1}{e^{\frac{2\sqrt{\delta^+}}{3\lambda} (T-t)} \kappa^+_n + 1}, \quad c^{-,n}_t = -\frac{\gamma}{2} + \sqrt{\delta^-} \frac{e^{\frac{2\sqrt{\delta^-}}{\lambda} (T-t)} \kappa^-_n - 1}{e^{\frac{2\sqrt{\delta^-}}{\lambda} (T-t)} \kappa^-_n + 1},
\end{equation}
where $\kappa^+_n \triangleq \frac{2\sqrt{\delta^+} + \gamma + 4n}{2\sqrt{\delta^+}-\gamma-4n}$ and $\kappa^-_n \triangleq \frac{2\sqrt{\delta^-} - \gamma + 4n}{2\sqrt{\delta^-}+\gamma-4n}$ (with $\delta^+, \delta^-$ introduced in~\eqref{def:deltaplusminus}). Also the linear BSDEs in~\eqref{eq:BSDEaux} have explicit solutions given by
\begin{equation} \label{eq:BSDEauxSol} 
\begin{aligned}
b^{+,n}_t = & \; \mathbb{E}\left[ \left( \frac{2n}{3} (\Xi^1_T + \Xi^2_T) + \frac{\gamma}{3} (x^1+x^2) \right) e^{-\int_t^T \big( \frac{c^{+,n}_s}{\lambda} - \frac{\gamma}{3\lambda} \big) ds} \right. \\
& \hspace{20pt} + \left. \int_t^T \frac{2\sigma}{3} (\xi^1_s + \xi^2_s) e^{-\int_t^s \big( \frac{c^{+,n}_u}{\lambda} - \frac{\gamma}{3\lambda} \big) du} \, ds \; \bigg\vert \; \mathcal{F}_t \right], \\
b^{-,n}_t = & \; \mathbb{E}\left[ \left( 2n (\Xi^1_T - \Xi^2_T) - \gamma (x^1-x^2) \right) e^{-\int_t^T \big( \frac{c^{-,n}_s}{\lambda} + \frac{\gamma}{\lambda} \big) ds} \right. \\
& \hspace{20pt} + \left. \int_t^T 2\sigma (\xi^1_s - \xi^2_s) e^{-\int_t^s \big( \frac{c^{-,n}_u}{\lambda} + \frac{\gamma}{\lambda} \big) du} \, ds \; \bigg\vert \; \mathcal{F}_t \right].
\end{aligned}
\end{equation}
Putting everything together with the ansatz in~\eqref{eq:ansatz}, we obtain (for every $n \in \mathbb{N}$) a pair $(\alpha^1, \alpha^2)$ of candidate solutions which simultaneously solve~\eqref{def:auxprob1} and~\eqref{def:auxprob2}, namely
\begin{equation} \label{eq:auxPorblemSol}
\begin{aligned}
\alpha^1_t = & \, \frac{1}{2\lambda} \left( \lambda (\alpha^1_t + \alpha^2_t) + \lambda (\alpha^1_t - \alpha^2_t) \right) \\
= & \, \frac{c^{+,n}_t+c^{-,n}_t}{2\lambda} \left( \frac{b^{+,n}_t + b^{-,n}_t}{c^{+,n}_t+c^{-,n}_t} - \frac{c^{+,n}_t-c^{-,n}_t}{c^{+,n}_t+c^{-,n}_t} X^{2}_t - X^{1}_t \right), \\ 
\alpha^{2}_t = & \, \frac{1}{2\lambda} \left( \lambda (\alpha^1_t + \alpha^2_t) - \lambda (\alpha^1_t - \alpha^2_t) \right) \\
= & \, \frac{c^{+,n}_t+c^{-,n}_t}{2\lambda} \left( \frac{b^{+,n}_t - b^{-,n}_t}{c^{+,n}_t+c^{-,n}_t} - \frac{c^{+,n}_t-c^{-,n}_t}{c^{+,n}_t+c^{-,n}_t} X^1_t - X^2_t \right).
\end{aligned}    
\end{equation}
Since all terms in~\eqref{eq:auxPorblemSol} can be explicitly computed, one can identify the limit in~\eqref{eq:auxPorblemSol} as the penalty parameter $n$ in~\eqref{def:auxprob1} and~\eqref{def:auxprob2} goes to infinity. This yields $(\hat{\alpha}^1,\hat{\alpha}^2)$ in~\eqref{eq:Xsol}, a candidate for the Nash equilibrium strategies for the original constraint stochastic differential game from Section~\ref{sec:problem}. It is then only left to show that $(\hat{\alpha}^1,\hat{\alpha}^2)$ is indeed the unique Nash equilibrium and belongs to $\mathcal{A}^1 \times \mathcal{A}^2$. This verification is carried out in the proof of Theorem~\ref{thm:main} in Section~\ref{sec:main} above.

\section*{Acknowledgement}
I am grateful to Jean-Pierre Fouque for encouraging and illuminating discussions. The paper has also profoundly benefited from the valuable comments and suggestions of an anonymous referee and Ulrich Horst.


\bibliographystyle{plainnat} \bibliography{finance}

\end{document}